\documentclass[12pt, draftclsnofoot, onecolumn]{IEEEtran}
\usepackage[utf8]{inputenc}
\usepackage{tabularx}
\usepackage{amsmath}
\usepackage{amsthm}
\usepackage{amssymb}
\usepackage{enumerate}
\usepackage{latexsym}
\usepackage{mathtools}
\usepackage{multirow}
\usepackage[ruled,linesnumbered]{algorithm2e}
\usepackage{longtable}
\usepackage{makecell}

\SetKwRepeat{Do}{do}{while}

\usepackage[unicode=true,
bookmarks=true,bookmarksnumbered=true,bookmarksopen=true,bookmarksopenlevel=1,
breaklinks=false,pdfborder={0 0 0},pdfborderstyle={},backref=false,colorlinks=true,linkcolor=blue,citecolor=blue,urlcolor=blue,bookmarks=false]
{hyperref}

\makeatletter
\theoremstyle{plain}

\newtheorem{theorem}{Theorem}
\newtheorem{corollary}{Corollary}
\newtheorem{proposition}{Proposition}
\newtheorem{lemma}{Lemma}
\newtheorem{definition}{Definition}

\newtheorem{example}{Example}
\theoremstyle{remark}
\newtheorem{remark}{Remark}

\usepackage{color, colortbl}
\usepackage[noadjust,sort,compress]{cite}

\@ifundefined{showcaptionsetup}{}{%
	\PassOptionsToPackage{caption=false}{subfig}}
\usepackage{subfig}

\usepackage[justification=centering]{caption}

\definecolor{Gray}{gray}{0.9}
\usepackage[first=0,last=9]{lcg}

\begin{document}
	
	\title{
		Generalized BCH Codes and Twisted Goppa Codes Attaining Their Designed Distances
	}
	
	\author{Yaqi Chen, Hao Chen, Cunsheng Ding, Huimin Lao, Chao Liu and Conghui Xie
		
		\thanks{

			Yaqi Chen is with the College of Cyber Security, Jinan University, Guangzhou, Guangdong Province, 510632, China. (e-mail: chenyq@stu.jnu.edu.cn).
			Hao Chen is with the College of Information Science and
			Technology, Jinan University, Guangzhou, Guangdong Province, 510632, China. (e-mail: haochen@jnu.edu.cn).
			Cunsheng Ding is with the Department of Computer Science and Engineering, The Hong Kong University of Science and Technology, Hong Kong, China. (e-mail: cding@ust.hk).
			Huimin Lao is with Strategic Centre for Research in Privacy-Preserving Technologies and Systems, Nanyang Technological University, Singapore. (e-mail: huimin.lao@ntu.edu.sg).
			Chao Liu is with the College of Information Science and
			Technology, Jinan University, Guangzhou, Guangdong Province, 510632, China. (chliuu@163.com).
			Conghui Xie is with the Hetao Institute of Mathematics and Interdisciplinary Sciences, Shenzhen, Guangdong, 518033, China. (e-mail: xiech@himis-sz.cn).
			
			The research of Hao Chen was supported by NSFC Grant 62032009. The research of C. Ding was supported by Hong Kong Research Grants Council under Grant No. 16301123. 
			The research of Huimin Lao was supported by the National Research Foundation, Singapore and Infocomm Media Development Authority under its Trust Tech Funding Initiative. Any opinions, findings and conclusions or recommendations expressed in this material are those of the author(s) and do not reflect the views of
			National Research Foundation, Singapore and Infocomm Media Development Authority.
			The research of Conghui Xie was supported by NSFC Grant 12526523. 
			(Corresponding author: Huimin Lao)
	}}
	
	\maketitle
	
	\begin{abstract}
		
		Determining the true minimum distance of an alternant code remains a notoriously difficult problem in coding theory. In this paper, we study the minimum distances of generalized BCH codes and twisted Goppa codes through their parity-check matrices.
		We first give a necessary and sufficient condition for an alternant code to attain its designed distance and apply it to generalized BCH codes. 
		As applications, we prove that broad classes of generalized BCH codes have minimum distances equal to their designed distances. These classes provide explicit infinite families rather than isolated examples.
		We characterize when a twisted Goppa code
		$\Gamma(L,g,\eta)$ with $\deg g=t$ satisfies
		$d(\Gamma(L,g,\eta))=t+1$, and derive structured classes and infinite families attaining this distance.
		
	\end{abstract}
	
	\begin{IEEEkeywords}
		Generalized BCH codes, twisted Goppa codes, alternant code, minimum distance
	\end{IEEEkeywords}
	
	\section{Introduction}
	\subsection{Background and Motivation}
	
	Determining the true minimum distance is a fundamental problem for linear codes. 
	Even for BCH codes, which are among the most extensively studied linear codes, the designed distance is usually only a lower bound and proving that this bound is sharp often requires additional structural arguments. 
	Exact results on the minimum distance of linear codes are therefore known only for special families, or are obtained by nonconstructive arguments.

	Alternant codes form one of the most important classes of linear codes. They are subfield subcodes of generalized Reed--Solomon codes and contain many classical families, including BCH codes~\cite{BCH1960,H1959}, Goppa codes~\cite{Goppa1970,Goppa1971}, and generalized BCH codes \cite{CC1975}. 
	From the parity-check viewpoint, an alternant code is obtained by allowing both the support and the column multipliers in a BCH-type parity-check matrix to vary. 
	This extra freedom in the definition allows alternant codes to contain asymptotically good families meeting the Gilbert--Varshamov bound, in contrast to the situation of BCH codes.
	
	Throughout this paper, for a linear code $C$, we denote by $d(C)$ its minimum distance. A $q$-ary linear code with length $n$, dimension $k$, and minimum distance $d$ is denoted by $[n,k,d]_q$.
	Let $\boldsymbol x=(x_1,\ldots,x_n)$ be a tuple of distinct elements of $\mathbb F_{q^m}$, and let $\boldsymbol y=(y_1,\ldots,y_n)\in(\mathbb F_{q^m}^*)^n$. 
	An alternant code $\mathcal A_r(\boldsymbol x,\boldsymbol y)$ of degree $r$ over $\mathbb F_q$ is defined by 
	$$
	\mathcal A_r(\boldsymbol x,\boldsymbol y)
	=
	\left\{
	\boldsymbol c\in\mathbb F_q^n:
	\sum_{i=1}^n c_i y_i x_i^\ell=0,\ 0\le \ell\le r-1
	\right\}.
	$$
	Equivalently, it is defined by the parity-check matrix $\mathbf H=\left(y_i x_i^\ell\right)_{\substack{0\le \ell\le r-1\\ 1\le i\le n}}.$
	Since the $x_i$ are distinct and the $y_i$ are nonzero, any $r$ columns of $\mathbf H$ are linearly independent over $\mathbb F_{q^m}$. 
	Thus $\mathcal A_r(\boldsymbol x,\boldsymbol y)$ has designed distance $\delta=r+1$.
	Therefore, determining when alternant codes have minimum distances equal to the designed distances is a natural problem.
	Several recent works have made progress in this direction. 
	In \cite{BCH2026}, the authors studied the minimum distances of some families of BCH codes and constructed explicit minimum weight codewords for  narrow-sense BCH codes. 
	In \cite{GoppaBCH2026}, the authors presented a criterion for Goppa codes to attain their designed distance and applied it to several families of Goppa codes and BCH codes.

	Generalized BCH (GBCH) codes are a large structured subclass of alternant codes.
	They were introduced by Chien and Choy through the Mattson--Solomon transform as an algebraic generalization of BCH codes~\cite{CC1975}. 
	This class contains BCH codes and Goppa codes as subclasses, and it also unifies Helgert's noncyclic generalizations of BCH and Srivastava codes~\cite{Helgert1972}. 
	In this sense, GBCH codes should be viewed not merely as variants of BCH codes, but as a large structured subclass of alternant codes.
	
	The normalized form of GBCH codes makes this alternant structure particularly transparent. Let $\gcd(n,q)=1$ and $m=\operatorname{ord}_n(q)$, let $\alpha\in\mathbb F_{q^m}$ be a primitive $n$-th root of unity, and define
	$$
	\mu_n=\{x\in\mathbb F_{q^m}:x^n=1\}.
	$$
	After normalization, a GBCH code with designed distance $\delta$ can be viewed as an alternant code on the support $\mu_n$ with multiplier $xU(x)$, where the polynomial $U(X)$ is nonzero on $\mu_n$. 
	Write $\mu_n=\{x_0,x_1,\ldots,x_{n-1}\}$. Equivalently, its
	parity-check equations have the form
	$$
	\sum_{i=0}^{n-1}c_iU(x_i)x_i^j=0,
	\qquad 1\le j\le\delta-1.
	$$
	The support of $n$-th root of unity preserves a BCH-type algebraic structure, while the normalized multiplier $U(X)$ provides a much larger design space than the usual BCH defining sets.

	Twisted Goppa codes form a twisted analogue of classical Goppa codes. 
	They are subfield subcodes of twisted generalized Reed-Solomon codes, see~\cite{SuiYue2023}.
	And they are obtained from classical Goppa codes by replacing the last alternant row with a twisted row.  
	Let
	$g(x)\in\mathbb F_{q^m}[x]$ be monic and have degree $t$, let
	$L=\{\alpha_1,\ldots,\alpha_n\}\subseteq\mathbb F_{q^m}$, and assume that
	$g(\alpha_i)\ne0$ for all $i$. A parity-check matrix of the twisted Goppa code $\Gamma(L,g,\eta)$ has columns
	$$
	\frac{1}{g(\alpha_i)}
	\begin{pmatrix}
		1\\
		\alpha_i\\
		\vdots\\
		\alpha_i^{t-2}\\
		\alpha_i^{t-1}+\eta\alpha_i^t
	\end{pmatrix},
	\qquad 1\le i\le n.
	$$
	Thus the code keeps the Goppa multiplier $1/g(\alpha_i)$ and the usual alternant support structure, while the last row
	is replaced by the twisted row $\alpha_i^{t-1}+\eta\alpha_i^t$. 
	In general, we have $d(\Gamma(L,g,\eta))\ge t$. 
	Under suitable conditions, this lower bound improves to $d(\Gamma(L,g,\eta))\ge t+1$. In this paper, we study when the improved bound is sharp, and in this setting we refer to $t+1$ as the designed distance of $\Gamma(L,g,\eta)$.
	
	\subsection{Our Contributions} 
	
	The main contributions of this paper are as follows.
	
	\begin{itemize}
		\item We give a necessary and sufficient condition for an alternant code to attain its designed distance, see Theorem~\ref{T-0-alternant}.

		\item We apply Theorem~\ref{T-0-alternant} to normalized GBCH codes. 
		Viewing a GBCH code as an alternant code on the root-of-unity support $\mu_n$ with multiplier $xU(x)$, we obtain a necessary and sufficient condition for $d=\delta$ in terms of $U(X)$, see Theorem \ref{T-1-GBCH}.
		As applications, we prove that broad classes of GBCH codes determined by structured multipliers have true minimum distance equal to their designed distance, see Theorems~\ref{T-F-1}--\ref{T-F-3}.
		These classes contain explicit infinite families and examples not reducible to the ordinary BCH or Goppa cases on the same support.

		\item We prove a twisted analogue for twisted Goppa codes.
		For $\Gamma(L,g,\eta)$ with $\deg g=t$, we characterize when $d(\Gamma(L,g,\eta))=t+1$, see Theorem~\ref{T-Goppa-0}. We then derive two structured classes attaining this distance, see Theorems~\ref{T-Goppa-1} and~\ref{T-Goppa-2}. 
		In particular, these classes yield infinite families over fixed base fields with unbounded lengths.
		
	\end{itemize}

	\subsection{Organization} 
	
	The rest of this paper is organized as follows. 
	Section~\ref{s2} recalls the necessary background on alternant codes, GBCH codes, and twisted Goppa codes. 
	Section~\ref{s3} proves the criterion for alternant codes to attain the designed distance and applies it to normalized GBCH codes. 
	Section~\ref{s4} shows that broad classes of GBCH codes determined by structured multipliers have $d=\delta$. Section~\ref{s5} proves the criterion for twisted Goppa codes and gives structured classes with minimum distance equal to the designed distance. 
	Section~\ref{s6} concludes the paper.
	
	\section{Preliminaries}\label{s2}
	
	\subsection{Alternant codes and GBCH codes}	
	
	For a vector $\boldsymbol x=(x_1,\ldots,x_n)$, its entries are always assumed to be pairwise distinct whenever it is used as a support vector.
	Let $\boldsymbol x=(x_1,\ldots,x_n)\in\mathbb F_{q^m}^n$ be a support vector and let 
	$\boldsymbol y=(y_1,\ldots,y_n)\in(\mathbb F_{q^m}^*)^n$ be a multiplier vector.
	The generalized Reed--Solomon code is defined as follows.
	
	\begin{definition}[Generalized Reed--Solomon code]
		For $1\le k\le n$, the generalized Reed--Solomon code of dimension $k$ over $\mathbb F_{q^m}$ with support $\boldsymbol x$ and multiplier $\boldsymbol y$ is
		$$
		\operatorname{GRS}_k(\boldsymbol x,\boldsymbol y)
		=
		\{(y_1f(x_1),\ldots,y_nf(x_n)):
		f(z)\in\mathbb F_{q^m}[z],\ \deg f<k\}.
		$$
		It is an $[n,k,n-k+1]_{q^m}$ MDS code.
	\end{definition}
	
	For $1\le r\le n$, define
	\begin{equation}\label{eq:alternant-pcm}
		\mathbf H_{\mathcal A}(\boldsymbol x,\boldsymbol y)=
		\begin{pmatrix}
			y_1 & y_2 & \cdots & y_n\\
			y_1x_1 & y_2x_2 & \cdots & y_nx_n\\
			\vdots & \vdots & & \vdots\\
			y_1x_1^{r-1} & y_2x_2^{r-1} & \cdots & y_nx_n^{r-1}
		\end{pmatrix}.
	\end{equation}
	
	\begin{definition}[Alternant code]
		The alternant code of degree $r$ over $\mathbb F_q$ with support $\boldsymbol x$ and multiplier $\boldsymbol y$ is
		$$
		\mathcal A_r(\boldsymbol x,\boldsymbol y)
		=
		\{\boldsymbol c\in\mathbb F_q^n:
		\mathbf H_{\mathcal A}(\boldsymbol x,\boldsymbol y)\boldsymbol c^T=0\}.
		$$
		Equivalently,
		$
		\mathcal A_r(\boldsymbol x,\boldsymbol y)
		=
		\operatorname{GRS}_r(\boldsymbol x,\boldsymbol y)^\perp\cap\mathbb F_q^n.
		$
	\end{definition}
	
	Equivalently, the monomial basis $1,z,\ldots,z^{r-1}$ may be replaced by any basis
	$f_0,\ldots,f_{r-1}$ of $\mathbb F_{q^m}[z]_{<r}$, which gives the row-equivalent matrix
	$$
	\bigl(y_i f_j(x_i)\bigr)_{0\le j\le r-1,\ 1\le i\le n}.
	$$
	
	\begin{lemma}[{\cite{Helgert1974}}]\label{L-1}
		The alternant code $\mathcal A_r(\boldsymbol x,\boldsymbol y)$ has parameters $[n,k,d]_q$ satisfying
		$$
		k\ge n-mr,\qquad d\ge r+1.
		$$
		The lower bound $r+1$ is called the designed distance of $\mathcal A_r(\boldsymbol x,\boldsymbol y)$.
		The distance bound follows because any $r$ columns of $\mathbf H_{\mathcal A}(\boldsymbol x,\boldsymbol y)$ are linearly independent over $\mathbb F_{q^m}$. 
	\end{lemma}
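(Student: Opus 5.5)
We prove the two bounds separately. The dimension bound comes from counting $\mathbb F_q$-linear constraints, and the distance bound comes from a Vandermonde argument over $\mathbb F_{q^m}$ that is then restricted to $\mathbb F_q$.

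\emph{Dimension.} Fix a basis $\beta_1,\ldots,\beta_m$ of $\mathbb F_{q^m}$ over $\mathbb F_q$. Each entry $y_ix_i^\ell$ of $\mathbf H_{\mathcal A}(\boldsymbol x,\boldsymbol y)$ can be written as $\sum_{s=1}^m a^{(\ell)}_{i,s}\beta_s$ with $a^{(\ell)}_{i,s}\in\mathbb F_q$. For $\boldsymbol c\in\mathbb F_q^n$, the single equation $\sum_i c_i y_i x_i^\ell=0$ over $\mathbb F_{q^m}$ holds if and only if the $m$ equations $\sum_i c_i a^{(\ell)}_{i,s}=0$, $1\le s\le m$, hold over $\mathbb F_q$. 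This is because the $\beta_s$ are $\mathbb F_q$-independent and the $c_i$ lie in $\mathbb F_q$. Replacing every row of $\mathbf H_{\mathcal A}$ in this way gives an $mr\times n$ matrix over $\mathbb F_q$ whose kernel is exactly $\mathcal A_r(\boldsymbol x,\boldsymbol y)$. By rank--nullity, $k\ge n-mr$.

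\emph{Distance.} It suffices to show that every set of $r$ columns of $\mathbf H_{\mathcal A}$ is linearly independent over $\mathbb F_{q^m}$. Take indices $i_1<\cdots<i_r$. The corresponding $r\times r$ submatrix equals $V\cdot\operatorname{diag}(y_{i_1},\ldots,y_{i_r})$, where $V=(x_{i_j}^\ell)_{0\le\ell\le r-1,\,1\le j\le r}$ is a Vandermonde matrix. Its determinant is
$$\prod_{j}y_{i_j}\prod_{j<j'}(x_{i_{j'}}-x_{i_j}),$$
which is nonzero because the $y_i$ are nonzero and the $x_i$ are pairwise distinct.

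Now suppose $\boldsymbol c\in\mathcal A_r(\boldsymbol x,\boldsymbol y)$ is nonzero with $\operatorname{wt}(\boldsymbol c)\le r$. Its support lies inside some set of $r$ indices, so $\mathbf H_{\mathcal A}\boldsymbol c^T=0$ gives a nontrivial linear dependence among those $r$ columns, with coefficients in $\mathbb F_q\subseteq\mathbb F_{q^m}$. This contradicts the independence just shown, so $d\ge r+1$. If $n<r+1$, the code is $\{0\}$ and the bound holds vacuously. The same argument goes through if the monomial rows are replaced by any basis $f_0,\ldots,f_{r-1}$ of $\mathbb F_{q^m}[z]_{<r}$, since that is a row operation by an invertible matrix.

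No step here is a real obstacle. The only point needing care is the dimension count: the expansion over $\mathbb F_q$ must be shown to give an \emph{equivalent} system of equations, not just a system implied by the original one. This is exactly where $c_i\in\mathbb F_q$ and the linear independence of the basis $\beta_1,\ldots,\beta_m$ are used.
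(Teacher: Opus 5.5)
Your proof is correct and follows the route the paper indicates: the distance bound comes from the linear independence of any $r$ columns, which you justify with the scaled Vandermonde determinant. The paper does not prove the dimension bound $k\ge n-mr$ and simply cites Helgert for it; your expansion of each row over an $\mathbb F_q$-basis of $\mathbb F_{q^m}$ is the standard argument for that part.
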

	
	We next recall generalized BCH codes in~\cite{CC1975}. The construction is based on the Mattson--Solomon transform, which is the finite field Fourier transform with respect to a primitive root of unity. Let $n$ be a positive integer
	with $\gcd(n,q)=1$, let $m=\operatorname{ord}_n(q)$, and let
	$\alpha\in\mathbb F_{q^m}$ be a primitive $n$-th root of unity.
	Define
	$T_n=\{A(X)\in\mathbb F_{q^m}[X]:\deg A<n\},$
	and
	$T_n^*=\{A(X)\in T_n:\gcd(A(X),X^n-1)=1\}.$
	For $A(X)\in\mathbb F_{q^m}[X]$, let $[A(X)]_n$ denote the unique representative in $T_n$ congruent to $A(X)$ modulo $X^n-1$.

	\begin{definition}[Mattson--Solomon transform]
		For $a(x)=\sum_{i=0}^{n-1}a_ix^i\in T_n$, its Mattson--Solomon
		transform with respect to $\alpha$ is
		$$
		\Phi_{\alpha}(a(x))=A(X)=\sum_{j=0}^{n-1}A_jX^j,\qquad
		A_j=a(\alpha^j).
		$$
		The inverse transform is
		$$
		\Phi_{\alpha}^{-1}(A(X))=\sum_{i=0}^{n-1}a_ix^i,\qquad
		a_i=\frac1n A(\alpha^{-i}),
		$$
		where $1/n$ is well defined in $\mathbb F_q$ since $\gcd(n,q)=1$.
	\end{definition}

	\begin{definition}[Generalized BCH code]
		Let $P(X),G(X)\in T_n^*$. The $q$-ary generalized BCH code $C(P,G)$
		associated with the ordered pair $(P(X),G(X))$ is
		$$
		C(P,G)=
		\left\{
		a(x)\in\mathbb F_q[x]:\deg a<n,\ 
		[P(X)A(X)]_n\equiv0\pmod{G(X)},\
		A(X)=\Phi_\alpha(a(x))
		\right\}.
		$$
	\end{definition}

	Let $p(x)=\Phi_{\alpha}^{-1}(P(X))=\sum_{i=0}^{n-1}p_ix^i$ and $g(x)=\Phi_{\alpha}^{-1}(G(X))=\sum_{i=0}^{n-1}g_ix^i.$
	Since $P(X),G(X)\in T_n^*$, we have $p_i\ne0$ and $g_i\ne0$ for all $i$. 
	Let
	$$
	h_i=\frac{p_i}{g_i},\qquad x_i=\alpha^{-i},\qquad i=0,\ldots,n-1.
	$$
	If $r=\deg G(X)$, then $C(P,G)$ has the parity-check matrix	
	\begin{equation}\label{eq:gbch-pcm}
		\mathbf H_{\rm GBCH}(P,G)=
		\begin{pmatrix}
			h_0x_0 & h_1x_1 & \cdots & h_{n-1}x_{n-1}\\
			h_0x_0^2 & h_1x_1^2 & \cdots & h_{n-1}x_{n-1}^2\\
			\vdots & \vdots & & \vdots\\
			h_0x_0^r & h_1x_1^r & \cdots & h_{n-1}x_{n-1}^r
		\end{pmatrix}.
	\end{equation}
	Equivalently, this is an alternant parity-check matrix of degree $r$ with support
	$
	(x_0,\ldots,x_{n-1})=(1,\alpha^{-1},\ldots,\alpha^{-(n-1)})
	$
	and multiplier
	$
	(h_0x_0,\ldots,h_{n-1}x_{n-1}).
	$
	Consequently, $C(P,G)$ has designed distance $r+1$.

	\subsection{Twisted Goppa codes}	
	Goppa codes form one of the most important subclasses of alternant codes.

	\begin{definition}[Goppa code {\cite{Goppa1970}}]
		Let \(L=(\alpha_1,\ldots,\alpha_n)\in\mathbb F_{q^m}^n\) be a support vector, and let \(g(z)\in\mathbb F_{q^m}[z]\) be a polynomial of degree \(r\) such that \(g(\alpha_i)\ne0\) for \(i=1,\ldots,n\). The Goppa code with support \(L\) and Goppa polynomial \(g(z)\) is
		\[
		\Gamma(L,g)=\left\{\boldsymbol c=(c_1,\ldots,c_n)\in\mathbb F_q^n:\sum_{i=1}^n\frac{c_i}{z-\alpha_i}\equiv0\pmod{g(z)}\right\}.
		\]
	\end{definition}

	Twisted Goppa codes were introduced by Sui and Yue as a twisted analogue of classical Goppa codes.

	\begin{definition}[Twisted Goppa code {\normalfont\cite[Definition~2.1]{SuiYue2023}}]\label{D-TGoppa}
		Let $g(z)\in\mathbb F_{q^m}[z]$ be a monic polynomial of degree $t$, let $L=(\alpha_1,\ldots,\alpha_n)$ be an ordered tuple of distinct elements of $\mathbb F_{q^m}$ such that $g(\alpha_i)\ne0$ for $1\le i\le n$, and let $\eta\in\mathbb F_{q^m}$. The twisted Goppa code with support $L$, Goppa polynomial $g(z)$, and twist parameter $\eta$ is 
		$$ 
		\Gamma(L,g,\eta)= \left\{ \boldsymbol c=(c_1,\ldots,c_n)\in\mathbb F_q^n: \sum_{i=1}^n c_i \left( \frac1{z-\alpha_i} -\frac{\eta\alpha_i^t}{g(\alpha_i)} \right) \equiv0\pmod{g(z)} \right\}. $$ 
	\end{definition}

	When $\eta=0$, Definition \ref{D-TGoppa} gives the classical Goppa code $\Gamma(L,g)$. In the following, we assume $\eta\ne0$ whenever $\eta^{-1}$ appears. By \cite[Proposition~2.2]{SuiYue2023}, $\Gamma(L,g,\eta)$ has a parity-check matrix $\mathbf H=\mathbf M\mathbf D$, where $$ \mathbf M= \begin{pmatrix} 1&\cdots&1\\ \alpha_1&\cdots&\alpha_n\\ \vdots&&\vdots\\ \alpha_1^{t-2}&\cdots&\alpha_n^{t-2}\\ \alpha_1^{t-1}+\eta\alpha_1^t&\cdots& \alpha_n^{t-1}+\eta\alpha_n^t \end{pmatrix} $$ and $$ \mathbf D= \operatorname{diag}\left( \frac1{g(\alpha_1)},\ldots,\frac1{g(\alpha_n)} \right). $$
	
	Since any $t-1$ columns of $\mathbf H$ contain a nonsingular Vandermonde submatrix, we have $d(\Gamma(L,g,\eta)) \ge t$. 
	The next lemma gives a simple condition when this bound improves to $t+1$.
	A proof is included for completeness.
	
	\begin{lemma}[{\normalfont\cite[Theorem~2.5]{SuiYue2023}}]\label{L-2}
		Let $V$ be an additive subgroup of $\mathbb F_{q^m}$. Suppose that
		$L\subseteq V$ and $\eta^{-1}\notin V.$
		Then every $t$ columns of $\mathbf H$ are linearly independent over $\mathbb F_{q^m}$.
		Consequently, $d(\Gamma(L,g,\eta))\ge t+1$.
	\end{lemma}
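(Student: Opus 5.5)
Since $\mathbf D$ is diagonal with nonzero entries $1/g(\alpha_i)$, it suffices to show that any $t$ columns of $\mathbf M$ are linearly independent over $\mathbb F_{q^m}$. Fix distinct indices $i_1,\ldots,i_t$ and write $\beta_k=\alpha_{i_k}$. Let $\mathbf M'$ be the $t\times t$ submatrix of $\mathbf M$ on these columns. The goal is to show $\det\mathbf M'\neq0$.

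The plan is to expand the determinant along the last row. Write the last row as the sum of the row $(\beta_k^{t-1})_k$ and $\eta$ times the row $(\beta_k^t)_k$. By multilinearity,
$$
\det\mathbf M'=\det V_{t-1}+\eta\det W,
$$
where $V_{t-1}$ is the ordinary Vandermonde matrix with rows $\beta^0,\ldots,\beta^{t-1}$. The matrix $W$ has rows $\beta^0,\ldots,\beta^{t-2},\beta^{t}$, so it is Vandermonde with the row $\beta^{t-1}$ omitted. By the standard identity for such a generalized Vandermonde determinant (a Schur function with a single box),
$$
\det W=\Bigl(\sum_{k=1}^t\beta_k\Bigr)\prod_{j<k}(\beta_k-\beta_j).
$$
To verify this, I would view $\det$ of the full $(t+1)\times(t+1)$ Vandermonde in $\beta_1,\ldots,\beta_t,z$. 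Its value is $\prod_k(z-\beta_k)\prod_{j<k}(\beta_k-\beta_j)$. Comparing coefficients of $z^{t-1}$ on both sides gives $-\det W=-e_1(\beta)\prod_{j<k}(\beta_k-\beta_j)$, up to the sign bookkeeping from the cofactor position, which I would check carefully.

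Therefore
$$
\det\mathbf M'=\prod_{j<k}(\beta_k-\beta_j)\Bigl(1+\eta\sum_{k=1}^t\beta_k\Bigr).
$$
The Vandermonde factor is nonzero because the $\beta_k$ are distinct. The second factor vanishes only if $\sum_k\beta_k=-\eta^{-1}$, which requires $\eta\ne0$; if $\eta=0$ the factor is $1$ and there is nothing to prove. Since $L\subseteq V$ and $V$ is an additive subgroup, both $\sum_k\beta_k$ and its negative lie in $V$. So $\sum_k\beta_k=-\eta^{-1}$ would force $\eta^{-1}\in V$, contradicting the hypothesis. Hence $\det\mathbf M'\ne0$.

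It follows that every $t$ columns of $\mathbf H=\mathbf M\mathbf D$ are independent over $\mathbb F_{q^m}$. A nonzero codeword $\boldsymbol c\in\mathbb F_q^n$ satisfies $\mathbf H\boldsymbol c^T=0$, which is a linear dependence among the columns in its support, so its weight is at least $t+1$. This gives $d(\Gamma(L,g,\eta))\ge t+1$.

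The only delicate step is the evaluation of $\det W$, including its sign. A sign error would turn the factor into $1-\eta e_1$, and the additive subgroup $V$ absorbs this either way since $V=-V$. So the conclusion is robust even if the sign needs adjustment.
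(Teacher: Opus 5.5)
Your proposal is correct and follows the same route as the paper: reduce to $\mathbf M$ via the nonsingular diagonal $\mathbf D$, evaluate the twisted Vandermonde determinant as $\prod_{j<k}(\beta_k-\beta_j)\,(1+\eta\sum_k\beta_k)$, and use that $V$ is an additive subgroup to show the second factor cannot vanish. Your derivation of that determinant, which the paper only calls standard, is sound, and the sign you flagged checks out: the cofactor of $z^{t-1}$ in the $(t+1)\times(t+1)$ Vandermonde carries sign $(-1)^{2t+1}=-1$, which matches the coefficient $-e_1$ of $z^{t-1}$ in $\prod_k(z-\beta_k)$, so $\det W=e_1\prod_{j<k}(\beta_k-\beta_j)$ exactly as you wrote.
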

	
	\begin{proof}
		Since $\mathbf D$ is nonsingular and diagonal, it suffices to consider the corresponding columns of $\mathbf M$. 
		Take pairwise distinct elements $\alpha_{i_1},\ldots,\alpha_{i_t}\in L$. 
		A standard twisted Vandermonde determinant calculation gives 
		$$ \det \begin{pmatrix} 1&\cdots&1\\ \alpha_{i_1}&\cdots&\alpha_{i_t}\\ \vdots&&\vdots\\ \alpha_{i_1}^{t-2}&\cdots&\alpha_{i_t}^{t-2}\\ \alpha_{i_1}^{t-1}+\eta\alpha_{i_1}^t&\cdots& \alpha_{i_t}^{t-1}+\eta\alpha_{i_t}^t \end{pmatrix} = \prod_{1\le u<v\le t}(\alpha_{i_v}-\alpha_{i_u}) \left(1+\eta\sum_{u=1}^t\alpha_{i_u}\right). $$ 
		The product is nonzero because the $\alpha_{i_u}$ are pairwise distinct. Also, since $L\subseteq V$, we have $\sum_{u=1}^t\alpha_{i_u}\in V$. If $1+\eta\sum_{u=1}^t\alpha_{i_u}=0$, then $-\eta^{-1}\in V$, contradicting $\eta^{-1}\notin V$. Hence the determinant is nonzero. 
		Thus every $t$ columns of $\mathbf H$ are linearly independent, and $d(\Gamma(L,g,\eta))\ge t+1$.
	\end{proof}
	
	\section{Characterization of GBCH Codes with $d=\delta$}\label{s3}
	
	In this section, we first give a general criterion for alternant codes to attain their designed distances
	and then apply it to normalized GBCH codes.

	\begin{theorem}\label{T-0-alternant}
		Let $\mathcal A_r(\boldsymbol x,\boldsymbol y)$ be the alternant code defined by the parity-check matrix \eqref{eq:alternant-pcm}, and let $\delta=r+1$. 
		Then $d(\mathcal A_r(\boldsymbol x,\boldsymbol y))=\delta$ if and only if there exist distinct indices $i_0,i_1,\ldots,i_r\in\{1,\ldots,n\}$ such that, for
		$
		I=\{i_0,i_1,\ldots,i_r\}
		$
		and
		$
		F_I(z)=\prod_{i\in I}(z-x_i),
		$
		the following hold
		\begin{equation}\label{eq:T-0}
			\frac{y_{i_0}F_I'(x_{i_0})}{y_{i_j}F_I'(x_{i_j})}\in\mathbb F_q^*,
			\qquad 1\le j\le r.
		\end{equation}
		In this case, $\mathcal A_r(\boldsymbol x,\boldsymbol y)$ contains a codeword $\boldsymbol c=(c_1,\ldots,c_{n})$ of weight $\delta$ given by
		$$
		c_{i_0}=1,\qquad
		c_{i_j}=\frac{y_{i_0}F_I'(x_{i_0})}{y_{i_j}F_I'(x_{i_j})},\quad 1\le j\le r,
		$$
		and $c_i=0$ for $i\notin I$.
	\end{theorem}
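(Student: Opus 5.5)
The plan is to reduce everything to the null space of an $r\times(r+1)$ submatrix of $\mathbf H_{\mathcal A}(\boldsymbol x,\boldsymbol y)$, which will turn out to be one-dimensional and to have an explicit generator. The tool is the Lagrange interpolation identity (equivalently, partial fractions of $1/F_I$). For a set $I$ of $r+1$ distinct indices and any polynomial $f$ with $\deg f\le r-1$,
\[
\sum_{i\in I}\frac{f(x_i)}{F_I'(x_i)}=0 .
\]
This holds because the left side is the coefficient of $z^{r}$ in the Lagrange interpolant of $f$ on the $r+1$ nodes $\{x_i\}_{i\in I}$. That interpolant equals $f$ itself, and $f$ has degree less than $r$.

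Taking $f=z^\ell$ for $0\le\ell\le r-1$, we see that the vector $\boldsymbol v$ supported on $I$ with $v_i=1/(y_iF_I'(x_i))$ satisfies $\sum_{i\in I}v_iy_ix_i^\ell=0$ for all these $\ell$. So $\boldsymbol v$ lies in the $\mathbb F_{q^m}$-kernel of $\mathbf H_{\mathcal A}$. Next, the $r\times(r+1)$ submatrix of $\mathbf H_{\mathcal A}$ on the columns in $I$ has rank $r$: any $r$ of its columns form a diagonal matrix times a Vandermonde matrix, which is nonsingular, as in Lemma~\ref{L-1}. Hence the $\mathbb F_{q^m}$-solutions supported inside $I$ form exactly the line $\mathbb F_{q^m}\boldsymbol v$. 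Moreover, every nonzero vector on this line has all $r+1$ coordinates nonzero, since the $x_i$ are distinct and the $y_i$ are nonzero.

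For sufficiency, assume \eqref{eq:T-0} holds and set $\boldsymbol c=y_{i_0}F_I'(x_{i_0})\,\boldsymbol v$. Then $c_{i_0}=1$, and $c_{i_j}$ is exactly the ratio in \eqref{eq:T-0}, which lies in $\mathbb F_q^*$. So $\boldsymbol c\in\mathbb F_q^n$ is a codeword of weight $r+1=\delta$. Combined with $d\ge\delta$ from Lemma~\ref{L-1}, this gives $d=\delta$ and also produces the stated codeword.

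For necessity, suppose $d=\delta$ and let $\boldsymbol c$ be a codeword of weight $r+1$ with support $I=\{i_0,\dots,i_r\}$. By the one-dimensionality above, $\boldsymbol c=\lambda\boldsymbol v$ for some $\lambda\in\mathbb F_{q^m}^*$. Normalize by $c_{i_0}\ne0$, noting that $c_{i_0}^{-1}\boldsymbol c$ is still over $\mathbb F_q$. Then $c_{i_j}/c_{i_0}=v_{i_j}/v_{i_0}$, which is precisely the ratio in \eqref{eq:T-0}, and it lies in $\mathbb F_q^*$.

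The only step requiring care is the interpolation identity together with the rank-$r$ claim. Both are standard, so I expect no real obstacle. The main point to get right is the indexing of the ratio, which must be $v_{i_j}/v_{i_0}$ and not its inverse.
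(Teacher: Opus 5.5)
Your proof is correct and takes essentially the same route as the paper's. The Lagrange interpolation identity $\sum_{i\in I}x_i^j/F_I'(x_i)=0$ for $0\le j\le r-1$, together with the rank-$r$ argument, shows that the solutions supported on $I$ form the line spanned by $\bigl(1/(y_iF_I'(x_i))\bigr)_{i\in I}$, and both directions then follow by comparing coordinate ratios, exactly as in the paper (which computes the kernel of the bare block $(x_i^j)$ and rescales by $y_i$, rather than working directly with the columns of $\mathbf H_{\mathcal A}(\boldsymbol x,\boldsymbol y)$ as you do).
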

	
	\begin{proof}
		By Lemma~\ref{L-1}, we have $d(\mathcal A_r(\boldsymbol x,\boldsymbol y)) \ge \delta$. Then $d(\mathcal A_r(\boldsymbol x,\boldsymbol y))=\delta$ if and only if there exists a codeword of weight $\delta$.

		Fix a subset $I\subseteq\{1,\ldots,n\}$ with $|I|=\delta$. 
		A vector $\boldsymbol c$ supported on \(I\) is a codeword in $\mathcal A_r(\boldsymbol x,\boldsymbol y)$  if and only if
		$$
		\sum_{i\in I}y_ic_ix_i^j=0,\qquad j=0,1,\ldots,r-1.
		$$
		Since the elements $x_i$ with $i\in I$, are pairwise distinct and $|I|=r+1$, the matrix
		$
		(x_i^j)_{0\le j\le r-1,\ i\in I}
		$
		has rank $r$, so its right kernel is one-dimensional. For $0\le j\le r-1$, the Lagrange interpolation of $z^j$ on $\{x_i:i\in I\}$ gives
		$$
		z^j=\sum_{i\in I}x_i^j\frac{F_I(z)}{(z-x_i)F_I'(x_i)}.
		$$
		Comparing the coefficient of $z^r$ gives
		$$
		\sum_{i\in I}\frac{x_i^j}{F_I'(x_i)}=0,\qquad 0\le j\le r-1.
		$$
		Hence the right kernel is generated by
		$
		\left(\frac1{F_I'(x_i)}\right)_{i\in I}.
		$

		It follows that a vector $\boldsymbol c$ supported on $I$ is a codeword if and only if there exists $\lambda\in\mathbb F_{q^m}^*$ such that
		$$
		y_ic_i=\frac{\lambda}{F_I'(x_i)},\qquad i\in I.
		$$
		Equivalently,
		\begin{equation}\label{eq:T-1}
			c_i=\frac{\lambda}{y_iF_I'(x_i)},\qquad i\in I.
		\end{equation}

		Assume first that $d(\mathcal A_r(\boldsymbol x,\boldsymbol y))=\delta$. Then there is a codeword $\boldsymbol c\in\mathbb F_q^n$ of weight $\delta$. Let $I=\operatorname{supp}(\boldsymbol c)$ and fix $i_0\in I$. 
		Since $c_i\in\mathbb F_q^*$ for all $i\in I$, it follows from \eqref{eq:T-1} that
		$$ \frac{y_{i_0}F_I'(x_{i_0})}{y_iF_I'(x_i)} = \frac{c_i}{c_{i_0}} \in\mathbb F_q^*,\qquad i\in I. $$ Thus \eqref{eq:T-0} holds.

		Conversely, suppose that $I$ and $i_0\in I$ satisfy \eqref{eq:T-0}. Define $\boldsymbol c=(c_1,\ldots,c_n)$ by $$ c_i= \frac{y_{i_0}F_I'(x_{i_0})}{y_iF_I'(x_i)},\qquad i\in I, $$ and $c_i=0$ for $i\notin I$. Then $\boldsymbol c\in\mathbb F_q^n$ and $\operatorname{wt}(\boldsymbol c)=|I|=\delta$. 
		Moreover, for $0\le j\le r-1$, $$ \sum_{i=1}^ny_ic_ix_i^j = y_{i_0}F_I'(x_{i_0}) \sum_{i\in I}\frac{x_i^j}{F_I'(x_i)} =0. $$ 
		Hence $\boldsymbol c\in\mathcal A_r(\boldsymbol x,\boldsymbol y)$. Since $d(\mathcal A_r(\boldsymbol x,\boldsymbol y))\ge\delta$, we obtain $d(\mathcal A_r(\boldsymbol x,\boldsymbol y))=\delta$.
	\end{proof}
	
	We now apply Theorem~\ref{T-0-alternant} to generalized BCH codes. Write $ \mu_n=\{x_0,x_1,\ldots,x_{n-1}\}\subseteq\mathbb F_{q^m} $ for the set of all $n$-th roots of unity. We first recall a useful normalization. By Theorem~7 of Chien and Choy~\cite{CC1975}, every GBCH code $C(P,G)$ with $r=\deg G(X)$ can be written in the form
	$
	C(P,G)=C(P^*,X^r),
	$
	where
	$
	P^*(X)=[X^rP(X)G(X)^{-1}]_n.
	$
	Thus, a GBCH code can be studied through a single normalized multiplier. For later use, we record this normalization and include a short proof.

	\begin{proposition}\label{P-normalized-GBCH} 
		Let $P(X),G(X)\in T_n^*$ and let $r=\deg G(X)$. Define $ U(X)=[P(X)G(X)^{-1}]_n$. 
		Then $U(X)\in T_n^*$ and $ C(P,G)=C([X^rU(X)]_n,X^r).$ 
		Conversely, for any $U(X)\in T_n^*$, the normalized GBCH code $C_U:=C([X^rU(X)]_n,X^r) $ has parity-check matrix 
		\begin{equation}\label{eq:normalized-gbch-pcm} 
			\mathbf H_U= \bigl(U(x_i)x_i^j\bigr)_{1\le j\le r,\ 0\le i\le n-1}.
		\end{equation} 
		Equivalently, $C_U$ is the alternant code of degree $r$ with support $(x_0,\ldots,x_{n-1})$ and multiplier $ y_i=x_iU(x_i)$ for $0\le i\le n-1. $
	\end{proposition}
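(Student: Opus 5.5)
The plan is to compute both codes through the parity-check matrix \eqref{eq:gbch-pcm}, after expressing the multipliers $h_i$ directly in terms of $P$ and $G$ evaluated on $\mu_n$. The key observation is the inverse Mattson--Solomon formula. For $A(X)\in T_n$ with $a(x)=\Phi_\alpha^{-1}(A(X))$ we have $a_i=\frac1n A(\alpha^{-i})=\frac1n A(x_i)$. Hence for $C(P,G)$,
$$
h_i=\frac{p_i}{g_i}=\frac{P(x_i)}{G(x_i)},\qquad 0\le i\le n-1 .
$$
Moreover, reduction modulo $X^n-1$ does not change values at points of $\mu_n$, because every $x\in\mu_n$ is a root of $X^n-1$.

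First, show $U\in T_n^*$. Since $\gcd(G,X^n-1)=1$, the inverse $G(X)^{-1}$ modulo $X^n-1$ exists, so $U$ is well defined. Evaluating at $x\in\mu_n$ gives $U(x)=P(x)G(x)^{-1}$, which is nonzero because $P(x)\ne0$ for $P\in T_n^*$. Since $\gcd(n,q)=1$, the polynomial $X^n-1$ is separable with root set exactly $\mu_n$. Therefore $U$ has no common root with $X^n-1$, i.e.\ $\gcd(U,X^n-1)=1$. The same argument shows that $P^*:=[X^rU(X)]_n$ is in $T_n^*$. Also $X^r\in T_n^*$, because $r=\deg G<n$ and $0\notin\mu_n$.

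Next, compute the parity-check matrix of $C(P^*,X^r)$, which has degree $r$. Its multipliers are
$$
h_i^*=\frac{P^*(x_i)}{x_i^r}=\frac{x_i^rU(x_i)}{x_i^r}=U(x_i).
$$
For $C(P,G)$, also of degree $r=\deg G$, the multipliers are $h_i=P(x_i)/G(x_i)=U(x_i)$. Thus \eqref{eq:gbch-pcm} gives the identical matrix $\bigl(U(x_i)x_i^j\bigr)_{1\le j\le r}$ for both codes, so $C(P,G)=C([X^rU]_n,X^r)$.

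For an arbitrary $U\in T_n^*$, the same computation gives the matrix $\mathbf H_U$ in \eqref{eq:normalized-gbch-pcm}. Writing its rows as $U(x_i)x_i^{j}=y_i x_i^{j-1}$ with $y_i=x_iU(x_i)$ and $0\le j-1\le r-1$ shows that $\mathbf H_U$ is the alternant matrix \eqref{eq:alternant-pcm} with support $(x_i)$ and multiplier $(y_i)$. Each $y_i$ is nonzero because $x_i\ne0$ and $U(x_i)\ne0$.

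The only delicate step is the bookkeeping of the inverse transform. One must check that the factors $\frac1n$ in $p_i$ and $g_i$ cancel, and that the indexing $x_i=\alpha^{-i}$ matches the inverse-transform formula. I would verify this carefully against the definition. Everything else is direct substitution into \eqref{eq:gbch-pcm}, which I take as given.
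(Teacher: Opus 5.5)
Your proposal is correct and follows essentially the same route as the paper. You show $h_i=P(x_i)/G(x_i)=U(x_i)=P^*(x_i)/x_i^r$, so both codes have the same degree-$r$ matrix \eqref{eq:gbch-pcm}, and then you rewrite $U(x_i)x_i^j=x_iU(x_i)x_i^{j-1}$ to get the alternant form. You are also slightly more explicit than the paper, which leaves implicit both the inverse Mattson--Solomon computation of $h_i$ and the checks that $[X^rU(X)]_n$ and $X^r$ lie in $T_n^*$.
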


	\begin{proof}
		Since $P(X),G(X)\in T_n^*$, it follows that  $U(X)=[P(X)G(X)^{-1}]_n$ also belongs to $T_n^*$. 
		Set
		$
		P^*(X)=[X^rU(X)]_n
		$
		and
		$
		G^*(X)=X^r.
		$
		For each $0\le i\le n-1$, we have $x_i\ne0$ and
		$$
		\frac{P^*(x_i)}{G^*(x_i)}
		=
		\frac{x_i^rU(x_i)}{x_i^r}
		=
		U(x_i)
		=
		\frac{P(x_i)}{G(x_i)}.
		$$
		Thus the two pairs $(P,G)$ and $(P^*,G^*)$ have the same values of $P/G$ on the defining support. 
		Then they define the same GBCH code, namely
		$
		C(P,G)=C([X^rU(X)]_n,X^r).
		$
		
		The parity-check matrix of the GBCH code $([X^rU(X)]_n,X^r)$ is 
		$
		\mathbf H_U=
		\bigl(U(x_i)x_i^j\bigr)_{1\le j\le r,\ 0\le i\le n-1}.
		$
		Since
		$$
		U(x_i)x_i^j=x_iU(x_i)x_i^{j-1},\qquad 1\le j\le r,
		$$
		this is an alternant parity-check matrix of degree $r$ with support $(x_0,\ldots,x_{n-1})$ and multiplier $y_i=x_iU(x_i)$.
	\end{proof}

	We next describe the relation of GBCH codes with BCH codes and Goppa codes.
	The following proposition records the two standard special cases in normalized GBCH codes.
	
	\begin{proposition}\label{P-classical-cases} 
		Let $U(X)\in T_n^*$ and let $ C_U=C([X^rU(X)]_n,X^r) $ be a normalized GBCH code. 
		\begin{enumerate} 
			\item[(i)] If $U(X)\equiv \gamma X^b\pmod{X^n-1}$ for some $\gamma\in\mathbb F_{q^m}^*$ and some integer $b$, then $C_U$ is the BCH code with $r$ consecutive defining zeros $ \alpha^{-(b+1)},\alpha^{-(b+2)},\ldots,\alpha^{-(b+r)}.$
			\item[(ii)] Let $G_0(X)$ be the unique polynomial with $\deg G_0<n$ such that \begin{equation}\label{eq:goppa-candidate} XU(X)G_0(X)\equiv1\pmod{X^n-1}. \end{equation} 
			If $\deg G_0(X)=r$, then $C_U$ is the Goppa code on the support $(x_0,\ldots,x_{n-1})$ generated by $G_0(X)$.
		\end{enumerate} 
	\end{proposition}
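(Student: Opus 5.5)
Both parts follow from Proposition~\ref{P-normalized-GBCH}, which says $C_U$ is the set of $\boldsymbol c\in\mathbb F_q^n$ with $\sum_i c_iU(x_i)x_i^j=0$ for $1\le j\le r$. In each case I will substitute the special form of $U$ into these parity checks and recognize the resulting system as the classical one. Since $x_i^n=1$ for every $x_i\in\mu_n$, any congruence modulo $X^n-1$ can be evaluated at the $x_i$. So $U(x_i)$ may be replaced by any polynomial congruent to $U$ modulo $X^n-1$.

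For (i), I evaluate $U(x_i)=\gamma x_i^b$ and divide the parity checks by the nonzero constant $\gamma$. They become $\sum_i c_ix_i^{b+j}=0$ for $1\le j\le r$. With $x_i=\alpha^{-i}$, this reads $\sum_i c_i(\alpha^{-(b+j)})^i=0$, i.e.\ $c(\alpha^{-(b+j)})=0$, where $c(x)=\sum_i c_ix^i$. Hence $C_U$ is exactly the cyclic code of length $n$ over $\mathbb F_q$ whose zeros include $\alpha^{-(b+1)},\dots,\alpha^{-(b+r)}$, i.e.\ the BCH code with these $r$ consecutive zeros (consecutive as powers of the primitive $n$-th root $\alpha^{-1}$). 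Here $b$ is read modulo $n$, which is harmless because $\alpha^n=1$.

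For (ii), I first note that $G_0$ exists: $U\in T_n^*$ is coprime to $X^n-1$, and so is $X$, so $XU(X)$ is invertible modulo $X^n-1$. From \eqref{eq:goppa-candidate} I get $x_iU(x_i)G_0(x_i)=1$. Hence $G_0(x_i)\neq0$ for every $i$ and $y_i=x_iU(x_i)=1/G_0(x_i)$.

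Proposition~\ref{P-normalized-GBCH} then shows that $C_U$ is the alternant code of degree $r$ with support $(x_0,\dots,x_{n-1})$ and multiplier $(1/G_0(x_i))_i$. It remains to invoke the standard fact that the Goppa code $\Gamma(L,G_0)$ with $\deg G_0=r$ equals this alternant code. This is the step needing the most care, and I will include its derivation. For $g=G_0$ with leading coefficient $g_r\neq 0$, write
$$\frac{g(z)-g(x_i)}{z-x_i}=\sum_{k=0}^{r-1}z^k\sum_{l=k+1}^{r}g_lx_i^{l-1-k}.$$
Modulo $g$, we have $1/(z-x_i)\equiv-\frac{1}{g(x_i)}\cdot\frac{g(z)-g(x_i)}{z-x_i}$, which is a polynomial of degree less than $r$. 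So the Goppa condition is that every coefficient of $z^k$ vanishes, i.e.\ $\sum_i c_i g(x_i)^{-1}\sum_{l>k}g_lx_i^{l-1-k}=0$ for $0\le k\le r-1$. This system is triangular with diagonal entries $g_r\neq0$ (take $k=r-1$, then $k=r-2$, and so on). It is therefore equivalent to $\sum_ic_ig(x_i)^{-1}x_i^j=0$ for $0\le j\le r-1$, which is exactly the alternant code with multiplier $1/G_0(x_i)$. Thus $C_U=\Gamma((x_0,\dots,x_{n-1}),G_0)$.

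The main point needing care is the invertibility and the nonvanishing $G_0(x_i)\ne0$. Both are immediate from the congruence evaluated on $\mu_n$. Everything else is routine.
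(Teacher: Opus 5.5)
Your proposal is correct and follows the paper's proof step for step: you evaluate the congruence for $U$ on $\mu_n$, rewrite $\mathbf H_U$ as $\gamma\,(x_i^{b+j})$ in (i) and as $(x_i^{\ell}/G_0(x_i))_{0\le \ell\le r-1}$ in (ii), and recognize the BCH and Goppa parity-check matrices. The only difference is that you derive explicitly, via the triangular system with diagonal entries $g_r\neq0$, that the Goppa condition is equivalent to the alternant parity checks with multiplier $1/G_0(x_i)$, whereas the paper takes this standard fact as known.
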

	
	\begin{proof}
		
		For (i), assume that $ U(X)\equiv \gamma X^b\pmod{X^n-1} $ with $\gamma\in\mathbb F_{q^m}^*$. Then $U(x_i)=\gamma x_i^b$ for all $i$, and hence $$ \mathbf H_U= \gamma\bigl(x_i^{b+j}\bigr)_{1\le j\le r,\ 0\le i\le n-1}. $$ The nonzero scalar $\gamma$ does not change the null space. Since $x_i=\alpha^{-i}$, this is the parity-check matrix of the BCH code with consecutive defining zeros $ \alpha^{-(b+1)},\alpha^{-(b+2)},\ldots,\alpha^{-(b+r)}. $ This proves (i).
		
		For (ii), since $U(X)\in T_n^*$ and $X$ is a unit modulo $X^n-1$, the polynomial $XU(X)$ is also a unit modulo $X^n-1$. Hence there is a unique polynomial $G_0(X)$ with $\deg G_0<n$ satisfying \eqref{eq:goppa-candidate}. Evaluating \eqref{eq:goppa-candidate} at $x_i$ gives $$ x_iU(x_i)=\frac1{G_0(x_i)},\qquad 0\le i\le n-1. $$ Thus $$ U(x_i)x_i^j = \frac{x_i^{j-1}}{G_0(x_i)},\qquad 1\le j\le r. $$ Therefore $$ \mathbf H_U = \left(\frac{x_i^\ell}{G_0(x_i)}\right)_{0\le \ell\le r-1,\ 0\le i\le n-1}. $$ This is the standard Goppa parity-check matrix on the support $(x_0,\ldots,x_{n-1})$ generated by $G_0(X)$ precisely when $\deg G_0(X)=r$. This proves (ii).
	\end{proof}

	Applying Theorem~\ref{T-0-alternant} to the normalized GBCH form in Proposition~\ref{P-normalized-GBCH} gives the following result. 
	For a general GBCH pair $(P,G)$, we first replace it by its normalized multiplier
	$
	U(X)=[P(X)G(X)^{-1}]_n.
	$

	\begin{theorem}\label{T-1-GBCH} 
		Let $U(X)\in T_n^*$ and let $ C_U=C([X^rU(X)]_n,X^r). $ Put $\delta=r+1$. Then $d(C_U)=\delta$ if and only if there exist distinct indices $i_0,i_1,\ldots,i_r\in\{0,\ldots,n-1\}$ such that, for $ I=\{i_0,i_1,\ldots,i_r\} $ and $ F_I(z)=\prod_{i\in I}(z-x_i), $  the following hold \begin{equation}\label{eq:gbch-exact-condition} 		
			\frac{x_{i_0}U(x_{i_0})F_I'(x_{i_0})} {x_{i_j}U(x_{i_j})F_I'(x_{i_j})} \in\mathbb F_q^*, \qquad 1\le j\le r. 
		\end{equation} 
		In this case, $C_U$ contains a codeword $\boldsymbol c=(c_0,\ldots,c_{n-1})$ of weight $\delta$ given by $$ c_{i_0}=1,\qquad c_{i_j}= \frac{x_{i_0}U(x_{i_0})F_I'(x_{i_0})} {x_{i_j}U(x_{i_j})F_I'(x_{i_j})}, \quad 1\le j\le r, $$ and $c_i=0$ for $i\notin I$. 
	\end{theorem}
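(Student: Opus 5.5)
The plan is to obtain Theorem~\ref{T-1-GBCH} as a direct specialization of Theorem~\ref{T-0-alternant}, using the identification in Proposition~\ref{P-normalized-GBCH}. The only real work is to check that the hypotheses of Theorem~\ref{T-0-alternant} hold in the normalized GBCH setting, and then to translate the resulting condition and codeword.

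\textbf{Step 1: check the alternant hypotheses.} By Proposition~\ref{P-normalized-GBCH}, $C_U$ is the alternant code $\mathcal A_r(\boldsymbol x,\boldsymbol y)$ of degree $r$. Its support is $\boldsymbol x=(x_0,\ldots,x_{n-1})$ and its multiplier is $y_i=x_iU(x_i)$. I will verify two things.
\begin{itemize}
\item The support entries $x_i=\alpha^{-i}$, $0\le i\le n-1$, are pairwise distinct, because $\alpha$ is a primitive $n$-th root of unity.
\item Each $y_i$ is nonzero. First, $x_i\neq 0$. Second, $U(X)\in T_n^*$ means $\gcd(U(X),X^n-1)=1$. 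Since every $x_i$ is a root of $X^n-1$, this forces $U(x_i)\ne 0$.
\end{itemize}
Hence $\boldsymbol y\in(\mathbb F_{q^m}^*)^n$. Proposition~\ref{P-normalized-GBCH} also shows that the parity-check matrix $\mathbf H_U$ coincides row by row with \eqref{eq:alternant-pcm} for this choice of $\boldsymbol x$ and $\boldsymbol y$.

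\textbf{Step 2: apply Theorem~\ref{T-0-alternant}.} The code $C_U$ has designed distance $\delta=r+1$, which matches the $\delta$ in Theorem~\ref{T-0-alternant}. Substituting $y_{i}=x_{i}U(x_{i})$ into condition \eqref{eq:T-0} gives exactly condition \eqref{eq:gbch-exact-condition}. The same substitution turns the weight-$\delta$ codeword of Theorem~\ref{T-0-alternant} into the stated codeword $\boldsymbol c$. Both directions of the equivalence then transfer directly.

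\textbf{Indexing.} The one bookkeeping point is that Theorem~\ref{T-0-alternant} indexes coordinates by $\{1,\ldots,n\}$, while here they run over $\{0,\ldots,n-1\}$. This is a harmless relabeling $i\mapsto i+1$, and $F_I$ is defined intrinsically from the support points in either convention.

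\textbf{Main obstacle.} No step here is a genuine obstacle; the argument is a direct translation. The only point needing care is the nonvanishing of $y_i$, which rests on the coprimality condition defining $T_n^*$. Without it, the alternant framework and in particular the designed-distance bound of Lemma~\ref{L-1} would not apply.
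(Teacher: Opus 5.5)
Your proposal is correct and follows essentially the same route as the paper, which identifies $C_U$ via Proposition~\ref{P-normalized-GBCH} as the alternant code with support $(x_0,\ldots,x_{n-1})$ and multiplier $y_i=x_iU(x_i)$, and then invokes Theorem~\ref{T-0-alternant}. Your extra checks (distinct support points, $y_i\neq 0$ from $U(X)\in T_n^*$, and the index shift) are details the paper leaves implicit.
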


	\begin{proof} 
		By Proposition~\ref{P-normalized-GBCH}, the code $C_U$ is the alternant code of degree $r$ with support $(x_0,\ldots,x_{n-1})$ and multiplier $ y_i=x_iU(x_i). $ The result follows immediately from Theorem~\ref{T-0-alternant}. 
	\end{proof}

	\section{GBCH Codes with $d=\delta$ from Structured Multipliers}\label{s4}
	
	In this section, we show that broad classes of GBCH codes have true
	minimum distances equal to their designed distances. We keep the notation
	of Section~\ref{s2}, let $\gcd(n,q)=1$, $m=\operatorname{ord}_n(q)$, and let 
	$\mu_n\subseteq\mathbb F_{q^m}$ be the set of all $n$-th roots of unity.
	More generally, for a divisor $s$ of $q^m-1$, write
	$
	\mu_s=\{\xi\in\mathbb F_{q^m}:\xi^s=1\}.
	$
	By Proposition~\ref{P-normalized-GBCH}, every GBCH code may be
	studied in the normalized form $C_U=C([X^rU(X)]_n,X^r)$, where
	$U(X)\in T_n^*$. We use Theorem~\ref{T-1-GBCH} to identify four general
	types of structured multipliers $U(X)$ for which the corresponding GBCH
	codes satisfy $d=\delta$.

	\subsection{Multipliers from $A(X^\delta)$}
	
	Throughout this subsection, let $\delta\mid n$ and $N=n/\delta$.
	We first consider multipliers of the form $U(X)=A(X^\delta)$, where $\gcd(A(Y),Y^N-1)=1$.
	
	\begin{theorem}\label{T-F-1} 
		Let $A(Y)\in\mathbb F_{q^m}[Y]$ satisfy $\deg A<N$ and $\gcd(A(Y),Y^N-1)=1$. 
		Define $ U(X)=A(X^\delta). $ 
		Then $U(X)\in T_n^*$, and the GBCH code $ C=C([X^{\delta-1}U(X)]_n,X^{\delta-1}) $ has minimum distance $d(C)=\delta$. 
	\end{theorem}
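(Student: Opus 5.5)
The plan is to apply Theorem~\ref{T-1-GBCH} with $r=\delta-1$, taking as the index set $I$ a full coset of the subgroup $\mu_\delta$ inside $\mu_n$.

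\emph{Step 1: $U\in T_n^*$.} First note $\deg U=\delta\deg A<\delta N=n$. Next, $\gcd(U(X),X^n-1)=1$ means $U(\xi)\neq0$ for every $\xi\in\mu_n$. For such $\xi$ we have $\xi^\delta\in\mu_N$. Since $\gcd(A(Y),Y^N-1)=1$ and $Y^N-1$ is separable, $A$ has no roots in $\mu_N$. Hence $U(\xi)=A(\xi^\delta)\ne0$.

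\emph{Step 2: choice of $I$.} Fix any $\beta\in\mu_n$ and let $I$ index the coset $\beta\mu_\delta=\{x_i:x_i^\delta=\beta^\delta\}$. This set has exactly $\delta=r+1$ elements because $\delta\mid n$. Then
\[
F_I(z)=\prod_{\zeta\in\mu_\delta}(z-\beta\zeta)=z^\delta-\beta^\delta,
\qquad
F_I'(z)=\delta z^{\delta-1}.
\]
Note that $\delta\neq0$ in $\mathbb F_q$, since $\delta\mid n$ and $\gcd(n,q)=1$.

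\emph{Step 3: verify \eqref{eq:gbch-exact-condition}.} For each $i\in I$,
\[
x_iU(x_i)F_I'(x_i)=\delta\,x_i^{\delta}A(x_i^\delta)=\delta\,\beta^\delta A(\beta^\delta).
\]
This value is independent of $i\in I$ and nonzero by Step 1. So every ratio in \eqref{eq:gbch-exact-condition} equals $1\in\mathbb F_q^*$. Theorem~\ref{T-1-GBCH} then gives $d(C)=\delta$, with the minimum-weight codeword being the indicator vector of the coset $\beta\mu_\delta$.

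The only point needing care is Step 2: confirming that $x_i^{\delta}$ is constant on the chosen coset. Since everything involved is explicit, I expect no real obstacle here.
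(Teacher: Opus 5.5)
Your proposal is correct and follows essentially the same route as the paper. The paper also takes $I$ to be a fiber $\{i: x_i^\delta=b\}$ of the $\delta$-th power map (a coset of $\mu_\delta$), obtains $F_I(z)=z^\delta-b$, and shows that $x_iU(x_i)F_I'(x_i)=\delta bA(b)$ is nonzero and independent of $i$, so every ratio in Theorem~\ref{T-1-GBCH} equals $1$.
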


	\begin{proof} 
		Write $\mu_n=\{x_0,x_1,\ldots,x_{n-1}\}$. 
		Since $\deg A<N$, we have $\deg U<n$.
		Moreover, $\gcd(A(Y),Y^N-1)=1$ implies that $A(b)\ne0$ for all $b\in\mu_N$. 
		For each $0\le i\le n-1$, we have $x_i^\delta\in\mu_N$, and hence $ U(x_i)=A(x_i^\delta)\ne0. $ Thus $U(X)\in T_n^*$. Fix $b\in\mu_N$ and set $$ I_b=\{i\in\{0,\ldots,n-1\}:x_i^\delta=b\}. $$ 
		The map $\xi\mapsto \xi^\delta$ from $\mu_n$ onto $\mu_N$ has kernel of size $\delta$, so $|I_b|=\delta$. 
		Moreover, $$ F_{I_b}(z)=\prod_{i\in I_b}(z-x_i)=z^\delta-b. $$ Hence $F_{I_b}'(x_i)=\delta x_i^{\delta-1}$ for $i\in I_b$. 
		Since $\delta\mid n$ and $\gcd(n,q)=1$, the element $\delta$ is nonzero in $\mathbb F_{q^m}$. 
		Therefore, for every $i\in I_b$, $$ x_iU(x_i)F_{I_b}'(x_i) = x_iA(x_i^\delta)\delta x_i^{\delta-1} = \delta bA(b), $$ which is nonzero and independent of $i$. 
		Write $I_b=\{i_0,i_1,\ldots,i_{\delta-1}\}$. 
		Then, for $1\le j\le \delta-1$, $$ \frac{x_{i_0}U(x_{i_0})F_{I_b}'(x_{i_0})} {x_{i_j}U(x_{i_j})F_{I_b}'(x_{i_j})} = 1\in\mathbb F_q^*. $$ Thus condition \eqref{eq:gbch-exact-condition} holds for the set $I_b$.
		By Theorem~\ref{T-1-GBCH}, we obtain $d(C)=\delta$.
	\end{proof}
	
	Theorem~\ref{T-F-1} also gives infinite families once $q$ and $\delta$
	are fixed.
	
	\begin{corollary}\label{C-1}
		Fix a prime power $q$ and an integer $\delta\ge2$ with
		$\gcd(\delta,q)=1$. Let $A(Y)\in\mathbb F_q[Y]$ be nonzero and
		satisfy $A(1)\ne0$. Then there are infinitely many integers $N$ such
		that the GBCH code
		$C=C([X^{\delta-1}A(X^\delta)]_n,X^{\delta-1})$ of length $n=\delta N$ satisfies
		$d(C)=\delta$.
	\end{corollary}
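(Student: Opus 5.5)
We will deduce the corollary from Theorem~\ref{T-F-1}. We need infinitely many $N$ for which the hypotheses hold: $\gcd(n,q)=1$ with $n=\delta N$, $\deg A<N$, and $\gcd(A(Y),Y^N-1)=1$ in $\mathbb F_{q^m}[Y]$, where $m=\operatorname{ord}_n(q)$.

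The natural choice is $N=\ell$ with $\ell$ prime. We require $\ell\nmid q$, $\ell\nmid\delta$, $\ell>\deg A$, and a further condition ensuring $A$ has no root among the nontrivial $\ell$-th roots of unity. Then $\gcd(n,q)=1$ because $\gcd(\delta,q)=1$ and $\ell\nmid q$, and $\deg A<N$ holds. Since $A(Y)\in\mathbb F_q[Y]$, the condition $\gcd(A(Y),Y^N-1)=1$ holds over $\mathbb F_{q^m}$ if and only if it holds over $\mathbb F_q$, because the gcd is invariant under field extension. So we need $A(1)\neq0$, which is assumed, and $\gcd(A(Y),\Phi_\ell(Y))=1$ over $\mathbb F_q$.

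The main step is handling $\Phi_\ell$. Over $\mathbb F_q$, $\Phi_\ell$ factors into irreducibles of degree $\operatorname{ord}_\ell(q)$. If $A$ shared a root $\zeta\neq1$ with $Y^\ell-1$, then the minimal polynomial of $\zeta$ would divide $A$, so $\operatorname{ord}_\ell(q)=[\mathbb F_q(\zeta):\mathbb F_q]\le\deg A$. But $\operatorname{ord}_\ell(q)\le\deg A$ forces $\ell\mid q^k-1$ for some $1\le k\le\deg A$. Since each nonzero $q^k-1$ has only finitely many prime divisors, this excludes only finitely many primes; note $q\ge2$ gives $q^k-1\ge1$. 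If $\deg A=0$, $A$ is a nonzero constant and the condition is automatic.

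Hence all but finitely many primes $\ell$ satisfy the hypotheses. For each such $\ell$, Theorem~\ref{T-F-1} with $N=\ell$ gives $d(C)=\delta$ for $n=\delta\ell$. Since there are infinitely many primes, this yields infinitely many $N$. The only delicate point is the descent of the coprimality condition from $\mathbb F_{q^m}$ to $\mathbb F_q$ together with the degree bound on $\operatorname{ord}_\ell(q)$.
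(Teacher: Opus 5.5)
Your argument is correct. Like the paper, you apply Theorem~\ref{T-F-1} to every $N$ outside a finite exceptional set, but you get the finiteness by a different mechanism.

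The paper works with the roots of $A$ directly. Each nonzero root $\zeta$ lies in a finite field, so it has a finite multiplicative order $o(\zeta)$, and $o(\zeta)>1$ because $A(1)\ne0$. Since $\zeta\in\mu_N$ exactly when $o(\zeta)\mid N$, any $N$ prime to $q$ and divisible by none of these finitely many orders works.

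You instead restrict to prime $N=\ell$ and use the factorization of $\Phi_\ell$ over $\mathbb F_q$. A common root forces $\operatorname{ord}_\ell(q)\le\deg A$, so every bad prime divides $\prod_{k=1}^{\deg A}(q^k-1)$.

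The paper's route is shorter and gives a much larger set of admissible $N$, including composite ones. This matches conditions such as ``$3\nmid N$'' in the examples that follow. Your route gives an explicit exceptional set that depends on $A$ only through $\deg A$, so you never need to locate its roots.

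Your write-up also makes explicit three points the paper leaves tacit:
\begin{enumerate}
\item[(i)] $\deg A<N$, which Theorem~\ref{T-F-1} as stated needs for $U(X)=A(X^\delta)$;
\item[(ii)] $\gcd(n,q)=1$;
\item[(iii)] coprimality over $\mathbb F_q$ is the same as coprimality over $\mathbb F_{q^m}$.
\end{enumerate}
Your extra condition $\ell\nmid\delta$ is not needed, though it does no harm.
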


	\begin{proof}
		Since $A(Y)$ has only finitely many roots and $A(1)\ne0$, there are
		infinitely many integers $N$ with $\gcd(N,q)=1$ for which no root of
		$A(Y)$ lies in $\mu_N$. For these $N$, we have
		$\gcd(A(Y),Y^N-1)=1$, and the result follows from
		Theorem~\ref{T-F-1} by taking $n=\delta N$ and
		$U(X)=A(X^\delta)$.
	\end{proof}

	\begin{example}
		The following examples illustrate Corollary~\ref{C-1}
		by giving explicit infinite families and instances with small length.
		\begin{itemize}
			\item Let $(q,\delta)=(2,3)$ and
			$A(Y)=Y^3+Y^2+Y\in\mathbb F_2[Y]$. The nonzero roots of $A(Y)$
			have order $3$, so $\gcd(A(Y),Y^N-1)=1$ whenever $3\nmid N$.
			Thus every $N>3$ with $\gcd(N,6)=1$ gives a GBCH code of length
			$n=3N$ and minimum distance $3$. For $N=7$, we have
			$U(X)=A(X^3)=X^9+X^6+X^3$, and
			$$
			C=C([X^2U(X)]_{21},X^2)=C(X^{11}+X^8+X^5,X^2),
			$$
			which has parameters $[21,13,3]_2$. The corresponding best known
			code is $[21,13,4]_2$ according to~\cite{codetable}.
			
			\item Let $(q,\delta)=(5,6)$ and $A(Y)=Y+2\in\mathbb F_5[Y]$.
			The unique root of $A(Y)$ is $3$, which has order $4$ in
			$\mathbb F_5^*$. Hence $\gcd(A(Y),Y^N-1)=1$ whenever $4\nmid N$.
			Thus every $N>1$ with $\gcd(N,5)=1$ and $4\nmid N$ gives a GBCH
			code of length $n=6N$ and minimum distance $6$. For $N=2$, we
			have $U(X)=A(X^6)=X^6+2$, and
			$$
			C=C([X^5U(X)]_{12},X^5)=C(X^{11}+2X^5,X^5),
			$$
			which has parameters $[12,5,6]_5$ and is best known according
			to~\cite{codetable}.
		\end{itemize}
	\end{example}

	Table~\ref{tab:1} gives further examples obtained from
	Theorem~\ref{T-F-1}. All parameters were verified by SageMath. By
	Proposition~\ref{P-classical-cases}, none of these examples falls
	into the BCH case or the Goppa case on the same support.

	\begin{table}[htbp]
		\centering
		\caption{Codes in Theorem~\ref{T-F-1}}
		\label{tab:1}
		\begin{tabular}{|c|c|c|c|c|c|c|}
			\hline
			$q$ & $m$ & $n$ & $\delta$ & $U(X)$ & Parameters of $C$ & Best known parameters\\
			\hline
			\multirow{2}{*}{$3$}
			& $4$ & $20$ & $4$ & $X^4+1$ & $[20,11,4]_3$ & $[20,11,6]_3$ \\
			\cline{2-7}
			& $4$ & $40$ & $8$ & $X^8+1$ & $[40,17,8]_3$ & $[40,17,14]_3$\\
			\hline
			
			\multirow{2}{*}{$5$}
			& $2$ & $12$ & $4$ & $X^4+2$ & $[12,6,4]_5$ & $[12,6,6]_5$\\
			\cline{2-7}
			& $2$ & $24$ & $6$ & $X^{12}+2$ & $[24,16,6]_5$ & $[24,16,6]_5$\\
			\hline
			
			\multirow{2}{*}{$7$}
			& $3$ & $18$ & $6$ & $X^6+2$ & $[18,5,6]_7$ & $[18,5,12]_7$\\
			\cline{2-7}
			& $4$ & $30$ & $10$ & $X^{10}+2$ & $[30,9,10]_7$ & $[30,9,17]_7$\\
			\hline
		\end{tabular}
	\end{table}
	
	\subsection{Multipliers from unions of power-map fibers}

	Throughout this subsection, let $\ell\mid n$ and put $N=n/\ell$.
	We next consider unions of fibers of the power map $x\mapsto x^\ell$ on
	$\mu_n$.
	
	\begin{theorem}\label{T-F-4}
		Let $S\subseteq\mathbb F_q^*\cap\mu_N$ with $|S|=t>0$, and put
		$\delta=t\ell$. Define $Q(Y)=\prod_{s\in S}(Y-s)$. Let
		$B(Y)\in\mathbb F_{q^m}[Y]$ satisfy
		$\gcd(B(Q(X^\ell)),X^n-1)=1$, and define
		$$
		U(X)=[X^{n-\ell}B(Q(X^\ell))]_n.
		$$
		Then $U(X)\in T_n^*$, and the GBCH code
		$C=C([X^{\delta-1}U(X)]_n,X^{\delta-1})$ has minimum distance
		$d(C)=\delta$.
	\end{theorem}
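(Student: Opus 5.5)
The plan is to apply Theorem~\ref{T-1-GBCH} with $r=\delta-1$, taking as the index set $I$ the union of the fibers of the power map $x\mapsto x^\ell$ over the points of $S$. The argument mirrors the proof of Theorem~\ref{T-F-1}. The only new point is that the quantity $x_iU(x_i)F_I'(x_i)$ will no longer be constant on $I$. It will instead take values whose pairwise ratios lie in $\mathbb F_q^*$, because $S\subseteq\mathbb F_q^*$.

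\textbf{Step 1: $U(X)\in T_n^*$.} By construction $U=[X^{n-\ell}B(Q(X^\ell))]_n$, so $\deg U<n$. For $x\in\mu_n$ we have $x^{n-\ell}=x^{-\ell}$, hence
$$U(x)=x^{-\ell}B(Q(x^\ell)).$$
This is nonzero, since the hypothesis $\gcd(B(Q(X^\ell)),X^n-1)=1$ says that $B(Q(X^\ell))$ has no root in $\mu_n$. Therefore $\gcd(U,X^n-1)=1$.

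\textbf{Step 2: choice of $I$.} Put $I=\{i: x_i^\ell\in S\}$. Since $S\subseteq\mu_N$ and the map $\xi\mapsto\xi^\ell$ from $\mu_n$ onto $\mu_N$ has all fibers of size $\ell$, we get $|I|=t\ell=\delta$. Moreover,
$$F_I(z)=\prod_{s\in S}(z^\ell-s)=Q(z^\ell),\qquad F_I'(z)=\ell z^{\ell-1}Q'(z^\ell).$$
For $i\in I$ with $x_i^\ell=s$ this gives
$$x_iU(x_i)F_I'(x_i)=x_i\cdot x_i^{-\ell}B(Q(s))\cdot \ell x_i^{\ell-1}Q'(s)=\ell\, B(0)\,Q'(s),$$
using $Q(s)=0$.

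\textbf{Step 3: nonvanishing and the ratio condition.} We check each factor of $\ell B(0)Q'(s)$.
\begin{itemize}
\item $\ell\neq0$ in $\mathbb F_{q^m}$, because $\ell\mid n$ and $\gcd(n,q)=1$.
\item $B(0)\ne0$. Indeed, for any $i\in I$ we have $B(0)=B(Q(x_i^\ell))$, and this is nonzero by the coprimality hypothesis.
\item $Q'(s)\ne0$, since $Q$ has distinct roots.
\item $Q'(s)\in\mathbb F_q^*$, since $Q\in\mathbb F_q[Y]$ and $s\in\mathbb F_q$.
\end{itemize}
Fix $i_0\in I$ with $x_{i_0}^\ell=s_0$. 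For any $i_j\in I$ with $x_{i_j}^\ell=s$, the ratio in \eqref{eq:gbch-exact-condition} equals $Q'(s_0)/Q'(s)\in\mathbb F_q^*$. Theorem~\ref{T-1-GBCH} then yields $d(C)=\delta$.

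\textbf{Main obstacle.} There is no serious one. The only care needed is in Step 3: the factor $B(0)$ must be shown nonzero, which uses the fact that $I$ is nonempty so the coprimality hypothesis can be applied at a point of $I$. One must also note that $B$ may have coefficients outside $\mathbb F_q$; this is harmless because $B(0)$ cancels in the ratio.
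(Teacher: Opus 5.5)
Your proof is correct and follows essentially the same route as the paper. You take $I$ to be the union of the $\ell$-th power fibers over $S$, so that $F_I(z)=Q(z^\ell)$, and reduce the condition of Theorem~\ref{T-1-GBCH} to the identity $x_iU(x_i)F_I'(x_i)=\ell B(0)Q'(s_i)$, justified by $\ell\ne0$, by $B(0)\ne0$ (which uses $S\ne\emptyset$ and the coprimality hypothesis), and by $Q'(s_i)\in\mathbb F_q^*$, exactly as the paper does.
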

	
	\begin{proof}
		Since $X^n-1$ splits over $\mu_n$ with distinct roots, the condition
		$
		\gcd(B(Q(X^\ell)),X^n-1)=1
		$
		is equivalent to
		\begin{equation}\label{e-T4-1}
			B(Q(x^\ell))\ne0,\qquad x\in\mu_n.
		\end{equation}
		For $x\in\mu_n$, we have $x^{n-\ell}=x^{-\ell}$. Hence
		$$
		U(x)=x^{-\ell}B(Q(x^\ell)),\qquad x\in\mu_n.
		$$
		By \eqref{e-T4-1}, $U(x)\ne0$ for all $x\in\mu_n$, and therefore $U(X)\in T_n^*$.
		
		Set
		$
		I=\{i\in\{0,\ldots,n-1\}:x_i^\ell\in S\}.
		$
		The map $x\mapsto x^\ell$ maps $\mu_n$ onto $\mu_N$ and has kernel of size $\ell$. Thus each equation $x^\ell=s$ with $s\in S$ has exactly $\ell$ solutions in $\mu_n$, and hence
		$
		|I|=t\ell=\delta.
		$
		Moreover,
		$$
		F_I(z)=\prod_{i\in I}(z-x_i)
		=
		\prod_{s\in S}(z^\ell-s)
		=
		Q(z^\ell).
		$$
		
		For $i\in I$, let $s_i=x_i^\ell\in S$. Then
		\begin{equation}\label{e-T4-2}
			F_I'(x_i)=\ell x_i^{\ell-1}Q'(s_i).
		\end{equation}
		
		Also $Q(s_i)=0$, so
		\begin{equation}\label{e-T4-3}
			U(x_i)=x_i^{-\ell}B(0),\qquad i\in I.
		\end{equation}
		Since $|S| \ne 0 $, choose $s\in S$ and $x\in\mu_n$ with $x^\ell=s$. Then $Q(x^\ell)=Q(s)=0$, and \eqref{e-T4-1} implies
		$
		B(0)\ne0.
		$
		Moreover, $\ell\ne0$ in $\mathbb F_{q^m}$ because $\ell\mid n$ and $\gcd(n,q)=1$. 
		Combining \eqref{e-T4-2} and \eqref{e-T4-3}, we obtain
		$$
		x_iU(x_i)F_I'(x_i)
		=
		\ell B(0)Q'(s_i),\qquad i\in I.
		$$
		
		As $S\subseteq\mathbb F_q^*$ and $Q(Y)=\prod_{s\in S}(Y-s)$ has distinct roots, we have
		$
		Q'(s_i)\in\mathbb F_q^*
		$
		for every $i\in I$. Write
		$
		I=\{i_0,i_1,\ldots,i_{\delta-1}\}.
		$
		Then, for $1\le j\le \delta-1$,
		$$
		\frac{x_{i_0}U(x_{i_0})F_I'(x_{i_0})}
		{x_{i_j}U(x_{i_j})F_I'(x_{i_j})}
		=
		\frac{Q'(s_{i_0})}{Q'(s_{i_j})}
		\in\mathbb F_q^*.
		$$
		Therefore conditions \eqref{eq:gbch-exact-condition} hold for the set $I$. By Theorem~\ref{T-1-GBCH}, we obtain $d(C)=\delta$.
	\end{proof}

	\begin{example}
		The following choices give explicit infinite families from
		Theorem~\ref{T-F-4}.
		Further instances with small length are listed in
		Table~\ref{tab:3}.
		\begin{itemize}
			\item Let $(q,\ell)=(5,6)$, $S=\{1,-1\}$, and $B(Y)=2Y+1$.
			Then $t=2$, $\delta=12$, and $Q(Y)=Y^2-1$. Since the roots of
			$B(Q(Y))=2Y^2-1$ have order $8$, every even integer $N$ with
			$\gcd(N,5)=1$ and $8\nmid N$ gives a GBCH code of
			length $n=6N$ and minimum distance $12$.
			For $N=4$, we have $n=24$ and
			$U(X)=[X^{18}B(Q(X^6))]_{24}=2X^6-X^{18}$. Thus
			$$
			C=C([X^{11}U(X)]_{24},X^{11})=C(2X^{17}-X^5,X^{11})
			$$
			has parameters $[24,9,12]_5$ and is best known according
			to~\cite{codetable}.
			
			\item Let $(q,\ell)=(2,3)$, $S=\{1\}$, and $B(Y)=Y^3+1$.
			Then $t=1$, $\delta=3$, and $Q(Y)=Y+1$. Since the nonzero roots
			of $B(Q(Y))=(Y+1)^3+1$ have order $3$, every $N>1$ with
			$\gcd(N,6)=1$ gives a binary GBCH code of length
			$n=3N$ and minimum distance $3$.
			For $N=5$, we have $n=15$ and
			$U(X)=[X^{12}B(Q(X^3))]_{15}=1+X^3+X^6$. Thus
			$$
			C=C([X^2U(X)]_{15},X^2)=C(X^2+X^5+X^8,X^2),
			$$
			has parameters $[15,9,3]_2$. The corresponding best known
			code is $[15,9,4]_2$ according to~\cite{codetable}.
		\end{itemize}
	\end{example}

	\begin{table}[htbp]
		\centering
		\caption{Codes in Theorem~\ref{T-F-4}}
		\label{tab:3}
		\begin{tabular}{|c|c|c|c|c|c|c|c|}
			\hline
			$q$ & $m$ & $n$ & $\ell$ & $\delta$ & $U(X)$ & Parameters of $C$ & Best known parameters\\
			\hline
			
			$2$ & $4$ & $15$ & $3$ & $3$
			& $1+X^3+X^6$
			& $[15,9,3]_2$
			& $[15,9,4 ]_2$\\
			\hline
			
			\multirow{2}{*}{$3$}
			& $4$ & $20$ & $4$ & $4$
			& $2+2X^{16}$
			& $[20,12,4]_3$
			& $[20,12,4 ]_3$\\
			\cline{2-8}
			& $4$ & $40$ & $4$ & $4$
			& $1+X^4+2X^{36}$
			& $[40,28,4]_3$
			& $[40,28,6 ]_3$\\
			\hline
			
			\multirow{2}{*}{$4$}
			& $2$ & $15$ & $3$ & $3$
			& $1+X^3+X^{12}$
			& $[15,11,3]_4$
			& $[15,11,4 ]_4$\\
			\cline{2-8}
			& $4$ & $51$ & $3$ & $3$
			& $1+X^3+X^{48}$
			& $[51,43,3]_4$
			& $[51,43,5 ]_4$\\
			\hline
			
			\multirow{2}{*}{$5$}
			& $2$ & $12$ & $3$ & $6$
			& $X^3+2X^9$
			& $[12,5,6]_5$
			& $[12,5,6 ]_5$\\
			\cline{2-8}
			& $2$ & $24$ & $6$ & $12$
			& $X^6+2X^{18}$
			& $[24,9,12]_5$
			& $[24,9,12 ]_5$\\
			\hline
			
			\multirow{2}{*}{$7$}
			& $2$ & $24$ & $6$ & $12$
			& $X^6+2X^{18}$
			& $[24,7,12]_7$
			& $[24,7,14 ]_7$\\
			\cline{2-8}
			& $4$ & $60$ & $10$ & $20$
			& $X^{10}+X^{50}$
			& $[60,19,20]_7$
			& $[60,19,27 ]_7$\\
			\hline
			
			\multirow{2}{*}{$8$}
			& $2$ & $21$ & $3$ & $3$
			& $1+X^3+X^{18}$
			& $[21,17,3]_8$
			& $[21,17,4 ]_8$\\
			\cline{2-8}
			& $4$ & $35$ & $5$ & $5$
			& $1+X^5+X^{30}$
			& $[35,19,5]_8$
			& $[35,19,11 ]_8$\\
			\hline
			
			\multirow{2}{*}{$9$}
			& $2$ & $20$ & $4$ & $4$
			& $2+2X^{16}$
			& $[20,15,4]_9$
			& $[20,15,5 ]_9$\\
			\cline{2-8}
			& $2$ & $40$ & $4$ & $4$
			& $1+X^4+2X^{36}$
			& $[40,34,4]_9$
			& $[40,34,5 ]_9$\\
			\hline
		\end{tabular}
	\end{table}

	\subsection{Multipliers from linearized polynomials}
	
	Throughout this subsection, let $p$ be the characteristic of $\mathbb F_q$ and
	let $\delta=p^e$ with $e\ge1$. Let $V$ be an $e$-dimensional
	$\mathbb F_p$-subspace of $\mathbb F_{q^m}$, and define
	$L(X)=\prod_{v\in V}(X-v)$. For $a\in\mathbb F_{q^m}$, write
	$a+V=\{a+v:v\in V\}$. Then $L(X)$ is a $p$-linearized polynomial of
	degree $\delta$, and $L'(X)=\rho$ for some $\rho\in\mathbb F_{q^m}^*$.
	
	\begin{theorem}\label{T-F-2}
		Let $a\in\mathbb F_{q^m}$ satisfy $a+V\subseteq\mu_n$, where $n$ is a divisor of $q^m-1$. Let $B(Y)\in\mathbb F_{q^m}[Y]$ satisfy $\gcd(B(L(X-a)),X^n-1)=1$. Define
		$U(X)=[X^{n-1}B(L(X-a))]_n.$
		Then $U(X)\in T_n^*$, and the GBCH code
		$C=C([X^{\delta-1}U(X)]_n,X^{\delta-1})$
		has minimum distance $d(C)=\delta$.
	\end{theorem}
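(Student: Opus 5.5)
The plan is to follow the template of Theorems~\ref{T-F-1} and~\ref{T-F-4}. We exhibit a single index set $I$ of size $\delta$ on which the quantity $x_iU(x_i)F_I'(x_i)$ is constant, and then invoke Theorem~\ref{T-1-GBCH}. The natural choice is the coset itself:
$$
I=\{i\in\{0,\ldots,n-1\}: x_i\in a+V\}.
$$
Since $a+V\subseteq\mu_n$ and $|V|=p^e=\delta$, this set has exactly $\delta$ elements.

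First we check that $U(X)\in T_n^*$. Because $X^n-1$ has distinct roots $\mu_n$, the hypothesis $\gcd(B(L(X-a)),X^n-1)=1$ is equivalent to
$$
B(L(x-a))\neq 0\quad\text{for all } x\in\mu_n.
$$
For $x\in\mu_n$ we have $x^{n-1}=x^{-1}$, so $U(x)=x^{-1}B(L(x-a))\ne0$. Also $\deg U<n$ by the reduction $[\cdot]_n$. Hence $U(X)\in T_n^*$.

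Next we compute $F_I$. We have
$$
F_I(z)=\prod_{v\in V}(z-a-v)=L(z-a).
$$
By the chain rule and the fact that $L'(X)=\rho$ is constant, $F_I'(z)=\rho$, so $F_I'(x_i)=\rho$ for every $i\in I$.

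For $i\in I$, write $x_i=a+v$ with $v\in V$. Then $L(x_i-a)=L(v)=0$, so $U(x_i)=x_i^{-1}B(0)$. Taking any $i\in I$ (the set $I$ is nonempty) in the nonvanishing condition above shows $B(0)\ne0$. Therefore
$$
x_iU(x_i)F_I'(x_i)=B(0)\rho\neq 0,
$$
which is independent of $i\in I$. Consequently every ratio in \eqref{eq:gbch-exact-condition} equals $1\in\mathbb F_q^*$, and Theorem~\ref{T-1-GBCH} with $r=\delta-1$ gives $d(C)=\delta$.

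The only point needing care is the claim that $L'$ is a nonzero constant. This is the standard fact that the subspace polynomial of an $\mathbb F_p$-subspace has the form $L(X)=\sum_k c_kX^{p^k}$. Its derivative is then the coefficient $c_0$ of $X$. That coefficient is nonzero because $L$ has the simple roots $v\in V$, so $L'(v)\ne0$. The paper's setup already records $L'(X)=\rho\in\mathbb F_{q^m}^*$, so we simply cite it. Everything else is the same routine substitution as in the earlier theorems.
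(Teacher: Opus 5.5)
Your proposal is correct and follows essentially the same route as the paper: you take $I$ to be the coset $a+V$ inside $\mu_n$, and you use $F_I(z)=L(z-a)$ with $F_I'\equiv\rho$ and $x_iU(x_i)=B(0)\neq0$ on $I$, so every ratio in \eqref{eq:gbch-exact-condition} equals $1$ and Theorem~\ref{T-1-GBCH} applies. Your pointwise check that $U$ has no zeros on $\mu_n$ (via $x^{n-1}=x^{-1}$) is just a more explicit version of the paper's remark that $X$ is a unit modulo $X^n-1$.
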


	\begin{proof}
		Since $X$ is a unit modulo $X^n-1$ and $\gcd(B(L(X-a)),X^n-1)=1$, we have
		$
		U(X)=[X^{n-1}B(L(X-a))]_n \in T_n^*
		$.
		Set
		$I=\{i\in\{0,\ldots,n-1\}:x_i\in a+V\}.$
		Since $a+V\subseteq\mu_n$ and $|V|=\delta$, we have $|I|=\delta$. Moreover,
		$$
		F_I(z)=\prod_{i\in I}(z-x_i)=L(z-a).
		$$
		Thus, for every $i\in I$,
		$$
		F_I'(x_i)=L'(x_i-a)=\rho.
		$$
		Also, since $L(x_i-a)=0$ and $x_i^n=1$, the definition of $U$ gives
		$$
		x_iU(x_i)=B(L(x_i-a))=B(0),\qquad i\in I.
		$$
		By assumption, $B(0)\ne0$, otherwise $a$ would be a common root of $B(L(X-a))$ and $X^n-1$. 
		Hence
		$$
		x_iU(x_i)F_I'(x_i)=B(0)\rho,\qquad i\in I,
		$$
		which is nonzero and independent of $i$.
		
		Write $I=\{i_0,i_1,\ldots,i_{\delta-1}\}$. Then, for $1\le j\le \delta-1$,
		$$
		\frac{x_{i_0}U(x_{i_0})F_I'(x_{i_0})}
		{x_{i_j}U(x_{i_j})F_I'(x_{i_j})}
		=
		1\in\mathbb F_q^*.
		$$
		Therefore conditions \eqref{eq:gbch-exact-condition} hold for the set $I$. By Theorem~\ref{T-1-GBCH}, we obtain $d(C)=\delta$.
	\end{proof}

	\begin{example}
		Theorem~\ref{T-F-2} gives the following infinite family. Let
		$n=q^m-1$ with $m\ge2$, take $V=\mathbb F_q$, and let
		$L(X)=X^q-X$. 
		Choose $a\in\mathbb F_{q^m}\setminus\mathbb F_q$ and
		$c\in\mathbb F_{q^m}$ with
		$\operatorname{Tr}_{\mathbb F_{q^m}/\mathbb F_q}(c)\ne0$, and $B(Y)=Y+a^q-a+c$. 
		Then
		$a+\mathbb F_q\subseteq\mathbb F_{q^m}^*=\mu_n$, and 
		$B(L(X-a))=X^q-X+c$ has no root in $\mathbb F_{q^m}$.
		Therefore Theorem~\ref{T-F-2} gives an infinite family of GBCH codes with length $n=q^m-1$ and minimum distance $q$.
		In this family,
		$$
		U(X)=[X^{n-1}(X^q-X+c)]_n=X^{q-1}-1+cX^{n-1}.
		$$
		For instance:
		\begin{itemize}
			\item Let $(q,m)=(7,2)$ and take $c=1$. Then $n=48$,
			$U(X)=X^6+6+X^{47}$, and
			$$
			C=C([X^6U(X)]_{48},X^6)=C(X^{12}+6X^6+X^5,X^6),
			$$
			which has parameters $[48,36,7]_7$. 
			The corresponding best known code has parameters $[48,36,8]_7$ according to~\cite{codetable}.
			
			\item Let $(q,m)=(5,2)$ and take $c=1$. Then $n=24$,
			$U(X)=X^4+4+X^{23}$, and
			$$
			C=C([X^4U(X)]_{24},X^4)=C(X^8+4X^4+X^3,X^4),
			$$
			which has parameters $[24,16,5]_5$. The corresponding best known
			code is $[24,16,6]_5$ according to~\cite{codetable}.
		\end{itemize}
	\end{example}
	
	Further instances with small length obtained from Theorem~\ref{T-F-2} are listed in Table~\ref{tab:2}.
	All parameters were verified by
	SageMath. By Proposition~\ref{P-classical-cases}, none of these
	examples falls into the BCH case or the Goppa case on the same
	support.
	
	\begin{table}[htbp]
		\centering
		\caption{Codes in Theorem~\ref{T-F-2}}
		\label{tab:2}
		\begin{tabular}{|c|c|c|c|c|c|c|}
			\hline
			$q$ & $m$ & $n$ & $\delta$ & $U(X)$ & Parameters of $C$ & Best known parameters\\
			\hline
			
			\multirow{2}{*}{$2$}
			& $5$ & $31$ & $2$
			& $X+1+X^{30}$
			& $[31,26,2]_2$
			& $[31,26,3 ]_2$\\
			\cline{2-7}
			& $7$ & $127$ & $2$
			& $X+1+X^{126}$
			& $[127,120,2]_2$
			& $[127,120,3 ]_2$\\
			\hline
			
			\multirow{2}{*}{$3$}
			& $2$ & $8$ & $3$
			& $X^2+2+X^7$
			& $[8,4,3]_3$
			& $[8,4,4 ]_3$\\
			\cline{2-7}
			& $4$ & $80$ & $3$
			& $X^2+2+X^{79}$
			& $[80,72,3]_3$
			& $[80,72,4 ]_3$\\
			\hline
			
			\multirow{2}{*}{$4$}
			& $3$ & $63$ & $4$
			& $X^3+1+X^{62}$
			& $[63,54,4]_4$
			& $[63,54,5 ]_4$\\
			\cline{2-7}
			& $4$ & $255$ & $4$
			& $X^3+1+cX^{254}$
			& $[255,243,4]_4$
			& $[255,243,5 ]_4$\\
			\hline
			
			\multirow{2}{*}{$5$}
			& $2$ & $24$ & $5$
			& $X^4+4+X^{23}$
			& $[24,16,5]_5$
			& $[24,16,6 ]_5$\\
			\cline{2-7}
			& $3$ & $124$ & $5$
			& $X^4+4+X^{123}$
			& $[124,112,5]_5$
			& $[124,112,6 ]_5$\\
			\hline
			
			$7$ & $2$ & $48$ & $7$
			& $X^6+6+X^{47}$
			& $[48,36,7]_7$
			& $[48,36,8 ]_7$\\
			\hline
			
			\multirow{2}{*}{$8$}
			& $2$ & $63$ & $8$
			& $X^7+1+cX^{62}$
			& $[63,49,8]_8$
			& $[63,49,9 ]_8$\\
			\cline{2-7}
			& $2$ & $63$ & $2$
			& $X+1+cX^{62}$
			& $[63,61,2]_8$
			& $[63,61,2 ]_8$\\
			\hline
			
			\multirow{2}{*}{$9$}
			& $2$ & $80$ & $9$
			& $X^8+8+X^{79}$
			& $[80,64,9]_9$
			& $[80,64,10 ]_9$\\
			\cline{2-7}
			& $2$ & $80$ & $3$
			& $X^2+2+X^{79}$
			& $[80,76,3]_9$
			& $[80,76,4 ]_9$\\
			\hline
		\end{tabular}
	\end{table}

	\subsection{Multipliers from irreducible polynomials}
	
	\begin{theorem}\label{T-F-3}
		Let $F(X)\in\mathbb F_q[X]$ be a monic irreducible divisor of $X^n-1$ with degree $\delta\ge2$. Let $\lambda\in\mathbb F_{q^m}^*$, and let $U(X)\in T_n^*$ satisfy
		\begin{equation}\label{e-T4-0} 		
			U(X)\equiv \lambda\bigl(XF'(X)\bigr)^{-1}\pmod{F(X)}.
		\end{equation} 
		Then the GBCH code
		$
		C=C([X^{\delta-1}U(X)]_n,X^{\delta-1})
		$
		has minimum distance $d(C)=\delta$. 
	\end{theorem}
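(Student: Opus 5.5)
The plan is to apply Theorem~\ref{T-1-GBCH} with the index set $I$ given by the roots of $F$. Since $F(X)\in\mathbb F_q[X]$ is a monic irreducible divisor of $X^n-1$, and $X^n-1$ is separable because $\gcd(n,q)=1$, the polynomial $F$ has $\delta$ distinct roots, all lying in $\mu_n$. Set
$$
I=\{i\in\{0,\ldots,n-1\}: F(x_i)=0\}.
$$
Then $|I|=\delta=r+1$ with $r=\delta-1$, and
$$
F_I(z)=\prod_{i\in I}(z-x_i)=F(z),
$$
so $F_I'(x_i)=F'(x_i)$ for $i\in I$. Separability of $F$ gives $F'(x_i)\ne0$. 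Also $x_i\ne0$, so $XF'(X)$ is invertible modulo $F(X)$ and the congruence \eqref{e-T4-0} makes sense.

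Next, evaluate \eqref{e-T4-0} at each root $x_i$ of $F$. Since $F(x_i)=0$, reduction modulo $F$ is compatible with evaluation at $x_i$, so
$$
U(x_i)=\frac{\lambda}{x_iF'(x_i)}, \qquad i\in I.
$$
Hence, for every $i\in I$,
$$
x_iU(x_i)F_I'(x_i)=\lambda,
$$
which is nonzero and independent of $i$.

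Writing $I=\{i_0,\ldots,i_{\delta-1}\}$, every ratio in \eqref{eq:gbch-exact-condition} therefore equals $1\in\mathbb F_q^*$. Theorem~\ref{T-1-GBCH} then yields $d(C)=\delta$, and it also supplies the all-ones vector on $I$ as a minimum weight codeword.

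The main point needing care is the evaluation step. We must check that the class of $(XF'(X))^{-1}$ modulo $F$ evaluates at $x_i$ to $1/(x_iF'(x_i))$. This holds because evaluation at a root of $F$ is a ring homomorphism $\mathbb F_{q^m}[X]/(F)\to\mathbb F_{q^m}$. The hypothesis $U\in T_n^*$ guarantees that $U$ is nonzero on all of $\mu_n$. The argument never uses irreducibility itself, only that $F$ is separable and divides $X^n-1$, which makes the conclusion robust.
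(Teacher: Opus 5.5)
Your proposal is correct and follows the paper's proof essentially step for step: you take $I$ to be the indices of the roots of $F$ in $\mu_n$ so that $F_I=F$, evaluate the congruence at these roots to get $x_iU(x_i)F_I'(x_i)=\lambda$ for every $i\in I$, and conclude from Theorem~\ref{T-1-GBCH} with all ratios equal to $1$. Your closing remark is also accurate: only separability and $F\mid X^n-1$ are used, not irreducibility.
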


	\begin{proof} 
		Since $\gcd(n,q)=1$, the polynomial $X^n-1$ is squarefree. Hence $F(X)$ is separable. Moreover, $F(0)\ne0$ because $F(X)\mid X^n-1$. 
		Thus $\gcd(F(X),XF'(X))=1$, so $XF'(X)$ is invertible modulo $F(X)$. 
		Define 
		$$ I=\{i\in\{0,\ldots,n-1\}:F(x_i)=0\}. $$ 
		Since $F(X)$ is monic of degree $\delta$ and splits into distinct roots in $\mu_n$, we have $|I|=\delta$ and $$ F_I(z)=\prod_{i\in I}(z-x_i)=F(z). $$ 
		
		For every $i\in I$, evaluating the congruence for $U(X)$ at $x_i$ gives $$ U(x_i)=\frac{\lambda}{x_iF'(x_i)}. $$ Since $F_I'(x_i)=F'(x_i)$, it follows that $$ x_iU(x_i)F_I'(x_i)=\lambda,\qquad i\in I. $$ 
		Write $I=\{i_0,i_1,\ldots,i_{\delta-1}\}$. Then, for $1\le j\le \delta-1$, $$ \frac{x_{i_0}U(x_{i_0})F_I'(x_{i_0})} {x_{i_j}U(x_{i_j})F_I'(x_{i_j})} = 1\in\mathbb F_q^*. $$ 
		Therefore conditions \eqref{eq:gbch-exact-condition} hold for the set $I$. By Theorem~\ref{T-1-GBCH}, we obtain $d(C)=\delta$.
	\end{proof}

	\begin{remark}\label{R1}
		Multipliers satisfying the congruence \eqref{e-T4-0} in Theorem~\ref{T-F-3} always exist. Write $ R(X)=(X^n-1)/F(X). $ Since $\gcd(F(X),R(X))=1$, the
		Chinese remainder theorem gives a unique class modulo $X^n-1$
		satisfying
		$$
		U(X)\equiv \lambda\bigl(XF'(X)\bigr)^{-1}\pmod{F(X)},
		\qquad
		U(X)\equiv1\pmod{R(X)}.
		$$
		The representative of this class in $T_n$ is nonzero on every point
		of $\mu_n$, and hence belongs to $T_n^*$.
	\end{remark}
	
	\begin{corollary}\label{C-2}
		Let $F(X)\in\mathbb F_q[X]$ be a monic irreducible polynomial of
		degree $\delta\ge2$ with $F(0)\ne0$. Let $\beta$ be a root of
		$F(X)$, and let $n=\operatorname{ord}(\beta)$. Then $F(X)\mid X^n-1$.
		Consequently, there exists a normalized GBCH code of length $n$ with
		minimum distance equal to its designed distance $\delta$.
	\end{corollary}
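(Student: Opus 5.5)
The argument has two parts. First we show that $F(X)\mid X^n-1$. Then we invoke Theorem~\ref{T-F-3}, using the multiplier constructed in Remark~\ref{R1}.

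\emph{Divisibility.} Since $F(0)\ne0$, the root $\beta$ is nonzero and lies in a finite extension of $\mathbb F_q$, so $n=\operatorname{ord}(\beta)$ is finite. It divides $q^{\deg F}-1$, so $\gcd(n,q)=1$. From $\beta^n=1$, $\beta$ is a root of $X^n-1\in\mathbb F_q[X]$. Because $F$ is monic irreducible over $\mathbb F_q$ with root $\beta$, it is the minimal polynomial of $\beta$ over $\mathbb F_q$, and hence $F(X)\mid X^n-1$. In fact all roots $\beta^{q^k}$ of $F$ have the same order $n$, so they are primitive $n$-th roots of unity; this is consistent with the setup but not needed.

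\emph{Existence of the code.} Set $m=\operatorname{ord}_n(q)$. Then $\mu_n\subseteq\mathbb F_{q^m}$ and $F$ splits there, so we are in the setting of Section~\ref{s4}. Fix any $\lambda\in\mathbb F_{q^m}^*$, for instance $\lambda=1$. Write $R(X)=(X^n-1)/F(X)$. Since $X^n-1$ is squarefree, $\gcd(F,R)=1$, and the Chinese remainder theorem gives a class $U(X)$ with
\[
U\equiv\lambda(XF')^{-1}\pmod F,\qquad U\equiv1\pmod R.
\]
Here $XF'$ is invertible modulo $F$ because $F$ is separable and $F(0)\ne0$. At roots of $F$, $U$ takes the nonzero values $\lambda/(x F'(x))$; at roots of $R$, it takes the value $1$. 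Hence $U\in T_n^*$. Theorem~\ref{T-F-3}, applied with $\delta=\deg F\ge2$, then shows that $C([X^{\delta-1}U(X)]_n,X^{\delta-1})$ has minimum distance $\delta$. By Proposition~\ref{P-normalized-GBCH} this code is a normalized GBCH code of degree $r=\delta-1$, and its designed distance is $r+1=\delta$.

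\emph{Expected difficulty.} There is no real obstacle. The only point needing care is that the hypotheses of Theorem~\ref{T-F-3} are met in its ambient setting: $\gcd(n,q)=1$, the field $\mathbb F_{q^m}$ with $m=\operatorname{ord}_n(q)$ contains $\beta$, and $F\in\mathbb F_q[X]$ is an irreducible divisor of $X^n-1$. The first and third are shown above. For the second, $\beta$ is a primitive $n$-th root of unity and so lies in $\mathbb F_{q^m}$.
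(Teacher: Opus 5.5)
Your proof is correct and follows the paper's argument. You first show $F(X)\mid X^n-1$ (you use that $F$ is the minimal polynomial of $\beta$; the paper uses that all conjugates $\beta^{q^k}$ are $n$-th roots of unity, which is the same standard fact), then apply Theorem~\ref{T-F-3} with the multiplier built in Remark~\ref{R1}, and you additionally make explicit the checks $\gcd(n,q)=1$ and $\beta\in\mathbb F_{q^m}$ for $m=\operatorname{ord}_n(q)$, which the paper leaves implicit.
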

	
	\begin{proof}
		Since $F(0)\ne0$, the root $\beta$ is nonzero and has finite
		multiplicative order $n$. All conjugates
		$\beta,\beta^q,\ldots,\beta^{q^{\delta-1}}$ are also $n$-th roots of
		unity, so all roots of $F(X)$ are roots of $X^n-1$. Hence
		$F(X)\mid X^n-1$. The conclusion follows from
		Theorem~\ref{T-F-3} and Remark~\ref{R1}.
	\end{proof}

	\begin{example} 
		Let $(q,n,\delta)=(3,121,5)$. Then $\operatorname{ord}_{121}(3)=5$. Choose $ F(X)=X^5-X^4-X^2-1\in\mathbb F_3[X]. $ By SageMath calculation, $F(X)$ is an irreducible divisor of $X^{121}-1$. Moreover, $$ \bigl(XF'(X)\bigr)^{-1}\equiv -X^3-X\pmod{F(X)}. $$ Thus Theorem~\ref{T-F-3} applies with $ U(X)=-X^3-X. $ Since $U(X)$ has no zeros on $\mu_{121}$, we have $U(X)\in T_{121}^*$. 
		Hence the GBCH code $ C=C([X^4U(X)]_{121},X^4)=C(-X^7-X^5,X^4) $ has parameters $[121,102,5]_3$. 
		This code is neither in the BCH case nor in the Goppa case of Proposition~\ref{P-classical-cases}. 
	\end{example}

	\section{Twisted Goppa codes with $d=\delta$}\label{s5}
	
	In this section, we study when twisted Goppa codes attain the improved bound $\delta=t+1$.
	We first give a necessary and sufficient condition for $d(\Gamma(L,g,\eta))=t+1$, and then use this criterion to construct two structured classes attaining this distance.
	
	\begin{theorem}\label{T-Goppa-0}
		Let $\Gamma(L,g,\eta)$ be a twisted Goppa code with $\deg g=t$ and $L=\{\alpha_1 , \ldots, \alpha_n\}$. Let $V$ be an additive subgroup of $\mathbb F_{q^m}$ such that
		$
		L\subseteq V
		$
		and
		$
		\eta^{-1}\notin V.
		$
		Then $d(\Gamma(L,g,\eta))=t+1$ if and only if there exist distinct indices
		$i_0,i_1,\ldots,i_t\in\{1,\ldots,n\}$ such that, for
		$$
		F(z)=\prod_{\ell=0}^t(z-\alpha_{i_\ell}),
		\qquad
		s=\sum_{\ell=0}^t\alpha_{i_\ell},
		$$
		the following hold
		\begin{equation}\label{eq:T3-0} 		
			\frac{g(\alpha_{i_j})}{g(\alpha_{i_0})}
			\cdot
			\frac{F'(\alpha_{i_0})}{F'(\alpha_{i_j})}
			\cdot
			\frac{\eta^{-1}+s-\alpha_{i_j}}{\eta^{-1}+s-\alpha_{i_0}}
			\in\mathbb F_q^*,
			\qquad 1\le j\le t.
		\end{equation}

	\end{theorem}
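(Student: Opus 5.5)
We mirror the proof of Theorem~\ref{T-0-alternant}. By Lemma~\ref{L-2}, $d(\Gamma(L,g,\eta))\ge t+1$. So equality holds exactly when some codeword has weight $t+1$. Fix a set $I=\{i_0,\ldots,i_t\}$ of $t+1$ distinct indices. We first compute the right kernel of the $t\times(t+1)$ submatrix $\mathbf M_I$ of $\mathbf M$ on the columns indexed by $I$. By Lemma~\ref{L-2}, any $t$ of these columns are independent, so $\mathbf M_I$ has rank $t$ and its kernel is one-dimensional. The kernel vector $\boldsymbol w=(w_i)_{i\in I}$ has no zero coordinate: a zero coordinate would give a dependency among $t$ columns.

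The key step is to find $\boldsymbol w$ explicitly. We claim that
$$
w_i=\frac{\eta^{-1}+s-\alpha_i}{F'(\alpha_i)},\qquad i\in I.
$$
The first $t-1$ rows require $\sum_{i\in I}w_i\alpha_i^k=0$ for $0\le k\le t-2$. The last row requires $\sum_{i\in I}w_i(\alpha_i^{t-1}+\eta\alpha_i^t)=0$.

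We verify these using the identities $\sum_{i\in I}\alpha_i^k/F'(\alpha_i)=0$ for $0\le k\le t-1$ and $\sum_{i\in I}\alpha_i^t/F'(\alpha_i)=1$. Both follow from Lagrange interpolation, as in the proof of Theorem~\ref{T-0-alternant}, by reading off leading coefficients. We also need $\sum_{i\in I}\alpha_i^{t+1}/F'(\alpha_i)=s$. This holds because the interpolant of $z^{t+1}$ on the roots of $F$ is $z^{t+1}-F(z)$, whose $z^t$-coefficient is $s$.

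With these identities, for $k\le t-2$ the sum $\sum_{i\in I} w_i\alpha_i^k$ is a combination of power sums of degree at most $t-1$, so it vanishes. For the last row, expanding gives
$$
(\eta^{-1}+s)(0+\eta\cdot1)-(1+\eta s)=1+\eta s-1-\eta s=0,
$$
where the first product collects the terms $(\eta^{-1}+s)\bigl(\sum\alpha_i^{t-1}/F'(\alpha_i)+\eta\sum\alpha_i^{t}/F'(\alpha_i)\bigr)$ and the subtracted term is $\sum\alpha_i^{t}/F'(\alpha_i)+\eta\sum\alpha_i^{t+1}/F'(\alpha_i)$. So the claim holds.

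We check the claimed $\boldsymbol w$ has no zero entry. The quantity $s-\alpha_i$ is a sum of elements of $V$, so it lies in $V$. If $\eta^{-1}+s-\alpha_i=0$, then $\eta^{-1}=-(s-\alpha_i)\in V$, which contradicts the hypothesis. Hence each $w_i\neq0$. The routine check of the interpolation identities is where errors could creep in, but there is no real obstacle.

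Since $\mathbf H=\mathbf M\mathbf D$, a vector $\boldsymbol c$ supported on $I$ is a codeword if and only if $c_i/g(\alpha_i)=\lambda w_i$ for some $\lambda\in\mathbb F_{q^m}^*$. Equivalently,
$$
c_i=\lambda\,g(\alpha_i)\frac{\eta^{-1}+s-\alpha_i}{F'(\alpha_i)},\qquad i\in I.
$$

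If a codeword of weight $t+1$ exists, let $I$ be its support. Then each ratio $c_{i_j}/c_{i_0}$ lies in $\mathbb F_q^*$, and this ratio is exactly the left-hand side of \eqref{eq:T3-0}.

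Conversely, suppose \eqref{eq:T3-0} holds. Choose $\lambda$ so that $c_{i_0}=1$. Then every $c_{i_j}$ equals the corresponding ratio in \eqref{eq:T3-0}, so it lies in $\mathbb F_q^*$. This gives a codeword in $\mathbb F_q^n$ of weight $t+1$, and therefore $d(\Gamma(L,g,\eta))=t+1$.
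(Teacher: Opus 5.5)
Your proposal is correct and follows essentially the same route as the paper. It uses the explicit kernel vector $(\eta^{-1}+s-\alpha_{i})/F'(\alpha_{i})$ on $I$, the interpolation identities $\sum_{i\in I}\alpha_i^r/F'(\alpha_i)=0,\,1,\,s$ for $r\le t-1$, $r=t$, $r=t+1$, nonvanishing via $\eta^{-1}\notin V$, and the coordinate-ratio argument in both directions, exactly as the paper does.
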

	
	\begin{proof}
		By Lemma~\ref{L-2}, $d(\Gamma(L,g,\eta))\ge t+1$.
		Then $d(\Gamma(L,g,\eta))= t+1$ if and only if there exists a codeword of weight $t+1$.
		Fix distinct indices $i_0,i_1,\ldots,i_t$. A vector supported on these positions
		is in $\Gamma(L,g,\eta)$ if and only if
		$$
		\sum_{\ell=0}^t
		\frac{c_{i_\ell}}{g(\alpha_{i_\ell})}\alpha_{i_\ell}^r=0,
		\qquad 0\le r\le t-2,
		$$
		and
		$$
		\sum_{\ell=0}^t
		\frac{c_{i_\ell}}{g(\alpha_{i_\ell})}
		\left(\alpha_{i_\ell}^{t-1}+\eta\alpha_{i_\ell}^t\right)=0.
		$$
		This is the parity-check system restricted to the positions
		$i_0,i_1,\ldots,i_t$. Since every $t$ columns of the parity-check matrix are linearly independent,
		this system has a one-dimensional solution space.

		For $0\le \ell\le t$, define
		$$
		u_\ell=\frac{\eta^{-1}+s-\alpha_{i_\ell}}{F'(\alpha_{i_\ell})}.
		$$
		Then $u_\ell\ne0$ for all $\ell$. The denominator is nonzero because the
		$\alpha_{i_\ell}$ are distinct. If the numerator were zero, then
		$$
		-\eta^{-1}=s-\alpha_{i_\ell}
		=\sum_{k\ne \ell}\alpha_{i_k}\in V,
		$$
		contrary to $\eta^{-1}\notin V$.

		By Lagrange interpolation on the roots of $F$, for every polynomial
		$P(z)$ of degree at most $t$, the coefficient of $z^t$ in $P(z)$ is
		$$
		\sum_{\ell=0}^t
		\frac{P(\alpha_{i_\ell})}{F'(\alpha_{i_\ell})}.
		$$
		Taking $P(z)=z^r$ for $0\le r\le t$ gives
		$$
		\sum_{\ell=0}^t\frac{\alpha_{i_\ell}^r}{F'(\alpha_{i_\ell})}=0
		\quad (0\le r\le t-1),
		\qquad
		\sum_{\ell=0}^t\frac{\alpha_{i_\ell}^t}{F'(\alpha_{i_\ell})}=1.
		$$
		For $z^{t+1}$, we apply the same coefficient comparison to its remainder
		modulo $F(z)$. Since the coefficient of $z^t$ in $F(z)$ is $-s$, this
		remainder has coefficient $s$, and hence
		$$
		\sum_{\ell=0}^t
		\frac{\alpha_{i_\ell}^{t+1}}{F'(\alpha_{i_\ell})}=s.
		$$

		Therefore, for $0\le r\le t-2$,
		$$
		\sum_{\ell=0}^t u_\ell\alpha_{i_\ell}^r
		=
		(\eta^{-1}+s)
		\sum_{\ell=0}^t\frac{\alpha_{i_\ell}^r}{F'(\alpha_{i_\ell})}
		-
		\sum_{\ell=0}^t\frac{\alpha_{i_\ell}^{r+1}}{F'(\alpha_{i_\ell})}
		=0.
		$$
		Moreover,
		$$
		\begin{aligned}
			\sum_{\ell=0}^t
			u_\ell\left(\alpha_{i_\ell}^{t-1}
			+\eta\alpha_{i_\ell}^t\right)
			&=
			-\sum_{\ell=0}^t
			\frac{\alpha_{i_\ell}^{t}}{F'(\alpha_{i_\ell})}
			+
			\eta(\eta^{-1}+s)
			\sum_{\ell=0}^t
			\frac{\alpha_{i_\ell}^{t}}{F'(\alpha_{i_\ell})}
			-\eta
			\sum_{\ell=0}^t
			\frac{\alpha_{i_\ell}^{t+1}}{F'(\alpha_{i_\ell})}  \\
			&=
			-1+\eta(\eta^{-1}+s)-\eta s=0.
		\end{aligned}
		$$

		Thus $(u_0,\ldots,u_t)$ spans the solution space for the variables $c_{i_\ell}/g(\alpha_{i_\ell})$. Equivalently, every nonzero solution supported on these positions satisfies

		\begin{equation}\label{eq:T3-1}
			\frac{c_{i_j}}{c_{i_0}}
			=
			\frac{g(\alpha_{i_j})}{g(\alpha_{i_0})}
			\cdot
			\frac{u_j}{u_0},
			\qquad 1\le j\le t.
		\end{equation}
		
		Suppose first that $d(\Gamma(L,g,\eta))=t+1$. Let $\boldsymbol c$ be a codeword of weight $t+1$. After reordering the coordinates, we may assume that $ \operatorname{supp}(\boldsymbol c)=\{i_0,i_1,\ldots,i_t\}. $ Since $c_{i_0},c_{i_1},\ldots,c_{i_t}\in\mathbb F_q^*$, \eqref{eq:T3-1} gives $$ \frac{g(\alpha_{i_j})}{g(\alpha_{i_0})} \cdot \frac{F'(\alpha_{i_0})}{F'(\alpha_{i_j})} \cdot \frac{\eta^{-1}+s-\alpha_{i_j}}{\eta^{-1}+s-\alpha_{i_0}} = \frac{c_{i_j}}{c_{i_0}} \in\mathbb F_q^*, \qquad 1\le j\le t. $$ Hence the required conditions \eqref{eq:T3-0} hold.

		Conversely, suppose that distinct indices $i_0,i_1,\ldots,i_t$ satisfy \eqref{eq:T3-0}. Define a vector by setting $c_{i_0}=1$,
		$$
		c_{i_j}=
		\frac{g(\alpha_{i_j})}{g(\alpha_{i_0})}
		\cdot
		\frac{F'(\alpha_{i_0})}{F'(\alpha_{i_j})}
		\cdot
		\frac{\eta^{-1}+s-\alpha_{i_j}}{\eta^{-1}+s-\alpha_{i_0}},
		\qquad 1\le j\le t,
		$$
		and $c_i=0$ for all other positions. Then $\boldsymbol c\in\mathbb F_q^n$ and
		$\operatorname{wt}(\boldsymbol c)=t+1$. Moreover, the coordinate ratios above
		show that $\boldsymbol c$ satisfies the restricted parity-check system, so
		$\boldsymbol c\in\Gamma(L,g,\eta)$. This gives
		$d(\Gamma(L,g,\eta))=t+1$.
	\end{proof}

	We give the first class of twisted Goppa codes with $d=\delta$.
	
	\begin{theorem}\label{T-Goppa-1}
		Let $r,h,m$ be positive integers such that $r\mid m$, $r\le h<m$,
		and $q^r>2$. Let $L$ be an $h$-dimensional $\mathbb F_q$-subspace of
		$\mathbb F_{q^m}$ with $\mathbb F_{q^r}\subseteq L$. Choose
		$\eta\in\mathbb F_{q^m}^*$ such that $\eta^{-1}\notin L$, and define
		$$
		g(x)=(x-\eta^{-1})^{q^r-1}-1.
		$$
		Then $g(\alpha)\ne0$ for all $\alpha\in L$, and
		$d(\Gamma(L,g,\eta))=q^r$.
	\end{theorem}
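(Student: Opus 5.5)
The plan is to combine Lemma~\ref{L-2}, which gives the lower bound, with Theorem~\ref{T-Goppa-0}, which reduces the upper bound to exhibiting one explicit support set. Here $t=\deg g=q^r-1$ and $g$ is monic, so the claim $d=q^r$ is exactly $d=t+1$.

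First, check that $g$ does not vanish on $L$. If $g(\alpha)=0$ for some $\alpha\in L$, then $(\alpha-\eta^{-1})^{q^r-1}=1$, so $\alpha-\eta^{-1}\in\mathbb F_{q^r}^*$. Since $\mathbb F_{q^r}\subseteq L$ and $L$ is additively closed, this gives $\eta^{-1}\in\alpha-\mathbb F_{q^r}\subseteq L$, contradicting $\eta^{-1}\notin L$. Applying Lemma~\ref{L-2} and Theorem~\ref{T-Goppa-0} with $V=L$ then gives $d(\Gamma(L,g,\eta))\ge t+1$. It also reduces the problem to finding $t+1=q^r$ indices satisfying \eqref{eq:T3-0}.

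For the indices, take the positions of the elements of $\mathbb F_{q^r}\subseteq L$; there are exactly $q^r=t+1$ of them. For this set we have the following.
\begin{itemize}
\item $F(z)=\prod_{\beta\in\mathbb F_{q^r}}(z-\beta)=z^{q^r}-z$, so $F'(z)=-1$ is constant. The factor $F'(\alpha_{i_0})/F'(\alpha_{i_j})$ is therefore $1$.
\item $s=\sum_{\beta\in\mathbb F_{q^r}}\beta$ is minus the coefficient of $z^{q^r-1}$ in $z^{q^r}-z$. This coefficient vanishes because $q^r-1\ge 2$, i.e.\ because $q^r>2$. Hence $s=0$, and this is where the hypothesis $q^r>2$ is used.
\end{itemize}
Condition \eqref{eq:T3-0} thus reduces to asking that the quantity $g(\beta)(\eta^{-1}-\beta)$ has ratios in $\mathbb F_q^*$ as $\beta$ ranges over $\mathbb F_{q^r}$.

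The key computation is that this quantity is constant in $\beta$. For $\beta\in\mathbb F_{q^r}$ we have
\[
g(\beta)(\beta-\eta^{-1})=(\beta-\eta^{-1})^{q^r}-(\beta-\eta^{-1}).
\]
Since $\beta^{q^r}=\beta$, this equals $\eta^{-1}-\eta^{-q^r}$. This value does not depend on $\beta$, and it is nonzero because $\eta^{-1}\notin\mathbb F_{q^r}\subseteq L$. Consequently every ratio in \eqref{eq:T3-0} equals $1\in\mathbb F_q^*$, and Theorem~\ref{T-Goppa-0} yields $d(\Gamma(L,g,\eta))=t+1=q^r$.

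The only delicate points are the vanishing of $s$, which needs $q^r>2$, and the identity $g(x)(x-\eta^{-1})=(x-\eta^{-1})^{q^r}-(x-\eta^{-1})$. The latter is the main step, and it follows directly from $g(x)+1=(x-\eta^{-1})^{q^r-1}$.
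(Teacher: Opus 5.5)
Your proof is correct and follows the paper's argument essentially step for step: you show $g$ has no zero on $L$, use Lemma~\ref{L-2} for the lower bound, and apply Theorem~\ref{T-Goppa-0} to the positions of $\mathbb F_{q^r}$, where $F'=-1$, $s=0$, and $g(\beta)(\eta^{-1}-\beta)=\eta^{-q^r}-\eta^{-1}$ is constant, so every ratio equals $1$. The only differences are minor: you derive that identity directly from $(g(x)+1)(x-\eta^{-1})=(x-\eta^{-1})^{q^r}$, which avoids the paper's sign manipulation, and you make explicit that $q^r>2$ is exactly what forces $s=0$, a point the paper leaves implicit.
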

	
	\begin{proof}
		Let $t=q^r-1$. We first show that $g$ has no zero on $L$. If
		$\alpha\in L$ and $g(\alpha)=0$, then
		$(\alpha-\eta^{-1})^{q^r-1}=1$. Since $r\mid m$, the roots of
		$z^{q^r}-z$ in $\mathbb F_{q^m}$ are exactly the elements of
		$\mathbb F_{q^r}$. Hence
		$\alpha-\eta^{-1}\in\mathbb F_{q^r}^*\subseteq L$, which contradicts
		$\eta^{-1}\notin L$.
		
		Since $L$ is additive and $\eta^{-1}\notin L$,
		Lemma~\ref{L-2} gives
		$d(\Gamma(L,g,\eta))\ge q^r$. It remains to identify a codeword of
		weight $q^r$. We apply Theorem~\ref{T-Goppa-0} to the support points
		in $\mathbb F_{q^r}$. For these points,
		$$
		F(z)=z^{q^r}-z,
		$$
		so $F'(z)=-1$ and their sum is zero.
		
		For every $\alpha\in\mathbb F_{q^r}$, we have
		$\alpha^{q^r}=\alpha$. Moreover, since
		$(\eta^{-1}-\alpha)=-(\alpha-\eta^{-1})$ and
		$(-1)^{q^r-1}=1$, we have
		$$
		(\eta^{-1}-\alpha)^{q^r-1}
		=
		(\alpha-\eta^{-1})^{q^r-1}.
		$$
		Therefore
		$$
		\eta^{-q^r}-\eta^{-1}
		=
		(\eta^{-1}-\alpha)^{q^r}-(\eta^{-1}-\alpha)
		=
		(\eta^{-1}-\alpha)g(\alpha),
		$$
		and hence
		$$
		g(\alpha)=\frac{\eta^{-q^r}-\eta^{-1}}{\eta^{-1}-\alpha},
		\qquad \alpha\in\mathbb F_{q^r}.
		$$
		
		Fix $\alpha_0\in\mathbb F_{q^r}$. For every
		$\alpha\in\mathbb F_{q^r}\setminus\{\alpha_0\}$, the condition \eqref{eq:T3-0} in
		Theorem~\ref{T-Goppa-0} becomes
		$$
		\frac{g(\alpha)}{g(\alpha_0)}
		\cdot
		\frac{F'(\alpha_0)}{F'(\alpha)}
		\cdot
		\frac{\eta^{-1}-\alpha}{\eta^{-1}-\alpha_0}
		=
		\frac{\eta^{-1}-\alpha_0}{\eta^{-1}-\alpha}
		\cdot
		\frac{\eta^{-1}-\alpha}{\eta^{-1}-\alpha_0}
		=1.
		$$
		By Theorem~\ref{T-Goppa-0}, we obtain $d(\Gamma(L,g,\eta))=q^r$.
	\end{proof}

	\begin{example}
		Let $(q,m,r,h)=(3,6,3,5)$, and let $\beta$ be a root of
		$X^6+X^5+X^4+1$ over $\mathbb F_3$. Set
		$$
		L=\mathbb F_{27}+\beta\mathbb F_9,\qquad
		\eta=\beta^{-2},\qquad
		g(x)=(x-\beta^2)^{26}-1.
		$$
		Then $L$ is a $5$-dimensional $\mathbb F_3$-subspace of
		$\mathbb F_{3^6}$, with $\mathbb F_{27}\subseteq L$, and
		$\eta^{-1}=\beta^2\notin L$. Hence Theorem~\ref{T-Goppa-1} gives
		$d(\Gamma(L,g,\eta))=27$. The code $\Gamma(L,g,\eta)$ has parameters
		$[243,101,27]_3$, where the dimension was verified by SageMath.
	\end{example}
	
	Theorem~\ref{T-Goppa-1} gives infinite families over any fixed base field.
	
	\begin{corollary}\label{C-TG-1}
		Fix a prime power $q$ and an integer $r\ge1$ with $q^r>2$. Then
		there exist infinitely many twisted Goppa codes over $\mathbb F_q$
		with unbounded lengths and minimum distance $q^r$.
	\end{corollary}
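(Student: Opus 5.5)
The plan is to derive Corollary~\ref{C-TG-1} directly from Theorem~\ref{T-Goppa-1}, keeping $q$ and $r$ fixed and letting $m$ and $h$ grow. Take any multiple $m$ of $r$ with $m\ge 2r$; for example $m=kr$ with $k\ge2$. Then take $h$ with $r\le h<m$, say $h=m-1$. We then need three ingredients:
\begin{itemize}
\item[(a)] an $h$-dimensional $\mathbb F_q$-subspace $L\subseteq\mathbb F_{q^m}$ containing $\mathbb F_{q^r}$;
\item[(b)] an element $\eta\in\mathbb F_{q^m}^*$ with $\eta^{-1}\notin L$;
\item[(c)] the polynomial $g(x)=(x-\eta^{-1})^{q^r-1}-1$.
\end{itemize}

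For (a), note that $\mathbb F_{q^r}$ is an $r$-dimensional $\mathbb F_q$-subspace of the $m$-dimensional space $\mathbb F_{q^m}$. Extending an $\mathbb F_q$-basis of $\mathbb F_{q^r}$ by $h-r$ further linearly independent vectors gives the required $L$, which is possible since $h\le m$. For (b), the condition $h<m$ gives $L\ne\mathbb F_{q^m}$, so some $\beta\in\mathbb F_{q^m}\setminus L$ exists. Necessarily $\beta\ne0$ because $0\in L$, so we may set $\eta=\beta^{-1}$.

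With these choices all hypotheses of Theorem~\ref{T-Goppa-1} hold: $r\mid m$, $r\le h<m$, $q^r>2$, $\mathbb F_{q^r}\subseteq L$, and $\eta^{-1}\notin L$. Hence $g$ has no zeros on $L$, so $\Gamma(L,g,\eta)$ is a well-defined twisted Goppa code over $\mathbb F_q$. The theorem gives $d(\Gamma(L,g,\eta))=q^r$, and the code length is $|L|=q^h=q^{m-1}$.

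As $k\to\infty$, the length $q^{kr-1}$ tends to infinity while the minimum distance stays equal to $q^r$. This produces infinitely many pairwise distinct codes, distinguished by their lengths, with unbounded lengths.

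I do not expect a genuine obstacle here. The only point needing care is ensuring the containment $\mathbb F_{q^r}\subseteq L$ is compatible with $L\neq \mathbb F_{q^m}$, which is exactly what the condition $r\le h<m$ arranges. One might optionally also note that these codes are nonzero, which is automatic since they contain a codeword of weight $q^r$ by Theorem~\ref{T-Goppa-0}.
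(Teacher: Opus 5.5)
Your proposal is correct and takes essentially the same route as the paper: both apply Theorem~\ref{T-Goppa-1} to an $h$-dimensional subspace $L\supseteq\mathbb F_{q^r}$ of $\mathbb F_{q^m}$ with $\eta^{-1}\notin L$, and let the length $q^h$ grow. The paper indexes by $h\ge r$ and picks a multiple $m>h$ of $r$, while you take $m=kr$ with $k\ge2$ and $h=m-1$; this is an equivalent parametrization, and you justify the existence of $L$ and $\eta$ slightly more explicitly.
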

	
	\begin{proof}
		For every integer $h\ge r$, choose a multiple $m$ of $r$ with $m>h$.
		Then $\mathbb F_{q^r}\subseteq\mathbb F_{q^m}$, and there exists an
		$h$-dimensional $\mathbb F_q$-subspace
		$L\subseteq\mathbb F_{q^m}$ containing $\mathbb F_{q^r}$. Choose
		$\eta\in\mathbb F_{q^m}^*$ such that $\eta^{-1}\notin L$. By
		Theorem~\ref{T-Goppa-1},
		$$
		d(\Gamma(L,(x-\eta^{-1})^{q^r-1}-1,\eta))=q^r.
		$$
		The length is $|L|=q^h$, which is unbounded as $h\to\infty$.
	\end{proof}

	We give the second class of twisted Goppa codes with $d=\delta$.
	
	\begin{theorem}\label{T-Goppa-2}
		Let $r,m,t$ be positive integers such that $r\mid m$, $r<m$, and
		$t+1\mid q^r-1$. Let $S$ be the multiplicative subgroup of
		$\mathbb F_{q^r}^*$ of order $t+1$. Choose
		$\eta\in\mathbb F_{q^m}^*$ such that
		$\eta^{-1}\notin\mathbb F_{q^r}$, and define
		$$
		g(x)=x^t+\eta^t x^{t-1}+\cdots+\eta^2x+\eta.
		$$
		Let
		$
		L=\{\alpha\in\mathbb F_{q^r}:g(\alpha)\ne0\}.
		$
		Then $S\subseteq L$, and
		$d(\Gamma(L,g,\eta))=t+1$. Moreover, the length of
		$\Gamma(L,g,\eta)$ is at least $q^r-t$.
	\end{theorem}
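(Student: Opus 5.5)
The plan is to apply Theorem~\ref{T-Goppa-0} with the additive subgroup $V=\mathbb F_{q^r}$ and the $t+1$ support points forming the subgroup $S$. Since $L\subseteq\mathbb F_{q^r}=V$ and $\eta^{-1}\notin V$ by hypothesis, the hypotheses of Theorem~\ref{T-Goppa-0} (and of Lemma~\ref{L-2}) hold. The length claim is immediate: $g$ is a nonzero polynomial of degree $t$, so it has at most $t$ roots in $\mathbb F_{q^r}$, and hence $|L|\ge q^r-t$. It therefore remains to show that $S\subseteq L$ and that the ratios \eqref{eq:T3-0} all lie in $\mathbb F_q^*$ for the points of $S$.

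\emph{Step 1: evaluate $g$ on $S$.} For $\alpha\in S$ we have $\alpha^{t+1}=1$. The coefficient of $x^k$ in $g$ is $\eta^{k+1}$ for $0\le k\le t-1$, and the leading coefficient is $1$. Hence
$$
\alpha g(\alpha)=\alpha^{t+1}+\sum_{k=0}^{t-1}\eta^{k+1}\alpha^{k+1}=\sum_{j=0}^{t}(\eta\alpha)^j .
$$
Multiplying by $1-\eta\alpha$ telescopes:
$$
\alpha g(\alpha)(1-\eta\alpha)=1-\eta^{t+1}\alpha^{t+1}=1-\eta^{t+1}.
$$
Now $\eta^{t+1}\ne1$: otherwise $\eta$ would lie in the subgroup of order $t+1$ of $\mathbb F_{q^r}^*$, so $\eta^{-1}\in\mathbb F_{q^r}$, a contradiction. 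Therefore $g(\alpha)\ne0$ for every $\alpha\in S$, which gives $S\subseteq L$. This also shows $1-\eta\alpha\ne0$, consistent with $\eta^{-1}\notin\mathbb F_{q^r}$.

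\emph{Step 2: verify \eqref{eq:T3-0}.} Take the indices corresponding to $S$. Then $F(z)=z^{t+1}-1$, so $F'(\alpha)=(t+1)\alpha^t=(t+1)\alpha^{-1}$, where $t+1\ne0$ in $\mathbb F_q$ because $t+1\mid q^r-1$. The sum of the roots is $s=0$ because $t+1\ge2$. The quantity in \eqref{eq:T3-0} thus becomes
$$
\frac{g(\alpha_j)\,\alpha_j\,(\eta^{-1}-\alpha_j)}{g(\alpha_0)\,\alpha_0\,(\eta^{-1}-\alpha_0)} .
$$
By Step 1, $\alpha g(\alpha)(\eta^{-1}-\alpha)=\eta^{-1}(1-\eta^{t+1})$ for every $\alpha\in S$. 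This value is independent of $\alpha$, so each ratio equals $1\in\mathbb F_q^*$. Theorem~\ref{T-Goppa-0} then yields $d(\Gamma(L,g,\eta))=t+1$.

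The only real obstacle is spotting the telescoping identity in Step 1, namely that on $S$ the multiplier $\alpha g(\alpha)$ is a geometric sum in $\eta\alpha$. Once that is seen, the nonvanishing of $g$ on $S$ and the constancy of the ratios both follow at once from $\eta^{-1}\notin\mathbb F_{q^r}$.
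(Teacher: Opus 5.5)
Your proof is correct and is essentially the paper's argument. The paper uses the same identity $a(\eta^{-1}-a)g(a)=\eta^{-1}-\eta^t$ on $S$, which is your $\alpha g(\alpha)(1-\eta\alpha)=1-\eta^{t+1}$ (obtained via the geometric sum) multiplied by $\eta^{-1}$. It deduces $S\subseteq L$ from $\eta^{t+1}\ne1$, and then applies Theorem~\ref{T-Goppa-0} with $F(z)=z^{t+1}-1$ and $s=0$, so every ratio in \eqref{eq:T3-0} equals $1$. Along the way you spell out two points the paper leaves implicit: that $t+1\ne0$ in $\mathbb F_q$, and why $\eta^{t+1}=1$ would force $\eta\in S$.
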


	\begin{proof}
		The polynomial $g$ is monic of degree $t$. For every $a\in S$, using
		$a^{t+1}=1$, we have
		\begin{equation}\label{eq:TG2-0}
			a(\eta^{-1}-a)g(a)=\eta^{-1}-\eta^t.
		\end{equation}

		The right-hand side of \eqref{eq:TG2-0} is nonzero. Otherwise
		$\eta^{t+1}=1$, which would imply
		$\eta^{-1}\in S\subseteq\mathbb F_{q^r}$, a contradiction. Hence
		$g(a)\ne0$ for all $a\in S$, and so $S\subseteq L$.
		
		Since $L\subseteq\mathbb F_{q^r}$ and
		$\eta^{-1}\notin\mathbb F_{q^r}$,
		Lemma~\ref{L-2} gives
		$d(\Gamma(L,g,\eta))\ge t+1$. We now apply
		Theorem~\ref{T-Goppa-0} to the $t+1$ support points in $S$. For these
		points,
		$$
		F(z)=\prod_{a\in S}(z-a)=z^{t+1}-1.
		$$
		Thus $F'(a)=(t+1)a^{-1}$ for $a\in S$, and the sum of the elements
		of $S$ is zero. Fix $a_0\in S$. By \eqref{eq:TG2-0}, for every
		$a\in S\setminus\{a_0\}$,
		$$
		\frac{g(a)}{g(a_0)}
		=
		\frac{a_0(\eta^{-1}-a_0)}{a(\eta^{-1}-a)}.
		$$
		Therefore the condition \eqref{eq:T3-0} in Theorem~\ref{T-Goppa-0} becomes
		$$
		\frac{g(a)}{g(a_0)}
		\cdot
		\frac{F'(a_0)}{F'(a)}
		\cdot
		\frac{\eta^{-1}-a}{\eta^{-1}-a_0}
		=
		\frac{a_0(\eta^{-1}-a_0)}{a(\eta^{-1}-a)}
		\cdot
		\frac{a}{a_0}
		\cdot
		\frac{\eta^{-1}-a}{\eta^{-1}-a_0}
		=1.
		$$
		By Theorem~\ref{T-Goppa-0}, we obtain $d(\Gamma(L,g,\eta))=t+1$.	
		Finally, since $\deg g=t$, the polynomial $g$ has at most $t$ roots
		in $\mathbb F_{q^r}$. Thus $|L|\ge q^r-t$.
	\end{proof}

	\begin{example}
		The following examples are obtained from Theorem~\ref{T-Goppa-2}.
		In each item, for every listed value of $t$, let $S$ be the subgroup of
		$\mathbb F_{q^r}^*$ of order $t+1$, and define
		$$
		g(x)=x^t+\eta^t x^{t-1}+\cdots+\eta^2x+\eta.
		$$
		SageMath verified that $g$ has no zeros on the chosen support and computed the dimensions.
		
		\begin{itemize}
			\item Let $(q,r,m)=(2,8,16)$, and let $\beta$ be a primitive
			element of $\mathbb F_{2^{16}}$ with minimal polynomial
			$X^{16}+X^5+X^3+X^2+1$ over $\mathbb F_2$. Take
			$L=\mathbb F_{2^8}$ and $\eta=\beta^{-1}$. For $t=14,4,2$, we
			obtain twisted Goppa codes with parameters $[256,32,15]_2$, $[256,192,5]_2$, and $[256,224,3]_2$.
			
			\item Let $(q,r,m)=(3,6,12)$, and let $\beta$ be a primitive
			element of $\mathbb F_{3^{12}}$ with minimal polynomial
			$X^{12}+X^6+X^5+X^4+X^2+2$ over $\mathbb F_3$. Take
			$L=\mathbb F_{3^6}$ and $\eta=\beta^{-1}$. For $t=12,7,6$, we
			obtain twisted Goppa codes with parameters $[729,585,13]_3$, $[729,645,8]_3$, and $[729,657,7]_3$.
		\end{itemize}
	\end{example}
	
	Theorem~\ref{T-Goppa-2} also yields infinite families with fixed designed
	distance and unbounded length.
	
	\begin{corollary}\label{C-TG-2}
		Fix a prime power $q$ and an integer $\delta\ge2$ with
		$\gcd(\delta,q)=1$. Then there exist infinitely many twisted Goppa
		codes over $\mathbb F_q$ with unbounded lengths and minimum distance
		$\delta$.
	\end{corollary}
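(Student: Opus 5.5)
We obtain this as an instance of Theorem~\ref{T-Goppa-2} with $t=\delta-1$. The task is to find infinitely many pairs $(r,m)$ with $r\mid m$, $r<m$, and $\delta\mid q^r-1$, together with a suitable $\eta$. The lengths must tend to infinity.

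First we choose $r$. Since $\gcd(\delta,q)=1$, $q$ is a unit modulo $\delta$. Put $r_0=\operatorname{ord}_\delta(q)$, so that $\delta\mid q^{r_0}-1$. Every multiple $r=kr_0$ with $k\ge1$ then also satisfies $\delta\mid q^{r}-1$. Hence $\mathbb F_{q^r}^*$, being cyclic of order $q^r-1$, contains a unique subgroup $S$ of order $\delta=t+1$.

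Next we take $m=2r$, so that $r\mid m$ and $r<m$. We choose $\eta\in\mathbb F_{q^m}^*$ with $\eta^{-1}\notin\mathbb F_{q^r}$; this is possible because $\mathbb F_{q^{2r}}\setminus\mathbb F_{q^r}$ is nonempty. With $g(x)=x^t+\eta^tx^{t-1}+\cdots+\eta$ and $L=\{\alpha\in\mathbb F_{q^r}:g(\alpha)\ne0\}$, Theorem~\ref{T-Goppa-2} gives $d(\Gamma(L,g,\eta))=t+1=\delta$.

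The same theorem gives length $|L|\ge q^r-t=q^{kr_0}-\delta+1$. This tends to infinity as $k\to\infty$, which yields infinitely many such codes with unbounded lengths.

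The only delicate point is the hypothesis $\delta\ge2$. It guarantees $t=\delta-1\ge1$, so $t$ is a positive integer as Theorem~\ref{T-Goppa-2} requires. Beyond that the argument is a direct specialization, and we expect no real obstacle.
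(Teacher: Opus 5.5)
Your proposal is correct and follows the paper's proof essentially step for step. You take $r$ a multiple of $\operatorname{ord}_\delta(q)$ and $m=2r$, choose $\eta$ with $\eta^{-1}\notin\mathbb F_{q^r}$, apply Theorem~\ref{T-Goppa-2} with $t=\delta-1$, and use $|L|\ge q^r-(\delta-1)$ to get unbounded lengths.
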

	
	\begin{proof}
		Let $e=\operatorname{ord}_\delta(q)$. For every integer $s\ge1$, let
		$r=es$ and $m=2r$. Then $\delta\mid q^r-1$, so
		$\mathbb F_{q^r}^*$ contains a multiplicative subgroup $S$ of order
		$\delta$. Choose $\eta\in\mathbb F_{q^m}^*$ such that
		$\eta^{-1}\notin\mathbb F_{q^r}$, and define
		$$g(x)=x^{\delta-1}+\eta^{\delta-1}x^{\delta-2}+\cdots+\eta^2x+\eta.$$
		Let
		$
		L=\{\alpha\in\mathbb F_{q^r}:g(\alpha)\ne0\}.
		$
		By Theorem~\ref{T-Goppa-2}, the code $\Gamma(L,g,\eta)$ has minimum
		distance $\delta$. Moreover, since $\deg g=\delta-1$, its length is
		at least $q^r-(\delta-1)$, which is unbounded as $s\to\infty$.
	\end{proof}

	{\section{Summary and concluding remarks}\label{s6}
		
		In this paper, we studied the minimum distances for GBCH codes and twisted Goppa codes. We first gave a necessary and sufficient condition for an alternant code to attain
		its designed distance. 
		For normalized GBCH codes, this condition was expressed in terms of the normalized multiplier $U(X)$.
		This led to broad classes of structured GBCH codes with minimum distances equal to their designed distances.
		
		We also proved a twisted analogue for twisted Goppa codes. More precisely, we characterized when a twisted
		Goppa code $\Gamma(L,g,\eta)$ with $\deg g=t$ satisfies
		$d(\Gamma(L,g,\eta))=t+1$. Applying this criterion, we constructed two structured classes of twisted Goppa codes attaining this distance, and obtained infinite families over fixed base fields with unbounded lengths.
		All examples in this paper were checked in SageMath by verifying $\mathbf H\boldsymbol c^\top =0$ for explicit codeword $\boldsymbol c$ of weight equal to the designed distance.
		
		Two related problems remain open. First, GBCH codes are known
		to be asymptotically good and can approach the Gilbert--Varshamov bound, but the known argument is existential. 
		It would be interesting to construct GBCH codes which are both asymptotically good and satisfy $d=\delta$. 
		Second, GBCH codes also have stronger BCH-type lower
		bounds, including the Hartmann--Tzeng bound, see~\cite[Corollary~5]{CC1975}.
		Determining when the GBCH codes have minimum distances attaining the Hartmann--Tzeng bound is naturally the next problem.}

\end{document}